\newtheorem{theorem}{Theorem}[section]
\newtheorem{lemma}[theorem]{Lemma}
\newtheorem{corollary}[theorem]{Corollary}
\newcommand{\eq}[1]{\hyperref[eq:#1]{(\ref*{eq:#1})}}
\renewcommand{\sec}[1]{\hyperref[sec:#1]{Section~\ref*{sec:#1}}}
\newcommand{\thm}[1]{\hyperref[thm:#1]{Theorem~\ref*{thm:#1}}}
\newcommand{\lem}[1]{\hyperref[lem:#1]{Lemma~\ref*{lem:#1}}}
\newcommand{\cor}[1]{\hyperref[cor:#1]{Corollary~\ref*{cor:#1}}}
\newcommand{\app}[1]{\hyperref[app:#1]{Appendix~\ref*{app:#1}}}
\newcommand{\tab}[1]{\hyperref[tab:#1]{Table~\ref*{tab:#1}}}
\def\ket#1{{\lvert}#1\rangle}
\renewcommand{\>}{\rangle}
\renewcommand{\(}{\left(}
\renewcommand{\)}{\right)}
\def\abs#1{\left| #1 \right|}
\def\span{\mbox{span}}
\newcommand{\eps}{\varepsilon}
\def\th#1{$#1^\mathrm{th}$}
\renewcommand{\th}[1]{${#1}^{\textrm{th}}$}
\begin{document}

\title{Nested Quantum Walks with Quantum Data Structures%
\thanks{Partially supported by the French ANR Defis project
ANR-08-EMER-012 (QRAC),
the European Commission IST STREP project
25596 (QCS),
NSERC,
NSERC Frequency,
MITACS,
and
the Ontario Ministry of Research and Innovation.}}

\author[1,2]{Stacey Jeffery\thanks{\texttt{sjeffery@uwaterloo.ca}}}
\author[1,2]{Robin Kothari\thanks{\texttt{rkothari@cs.uwaterloo.ca}}}
\author[3]{Fr\'ed\'eric Magniez\thanks{\texttt{frederic.magniez@univ-paris-diderot.fr}}}
\affil[1]{\small David R.\ Cheriton School of Computer Science, University of Waterloo, Canada}
\affil[2]{Institute for Quantum Computing, University of Waterloo, Canada}
\affil[3]{\small CNRS, LIAFA, Univ Paris Diderot, Sorbonne Paris-Cit\'e, 75205 Paris, France}

\date{}

\maketitle

\begin{abstract}
We develop a new framework that extends the quantum walk framework of Magniez, Nayak, Roland, and Santha, by utilizing the idea of quantum data structures to construct an efficient method of nesting quantum walks. Surprisingly, only classical data structures were considered before for searching via quantum walks. 

The recently proposed learning graph framework of Belovs has yielded improved upper bounds for several problems, including triangle finding and  more general subgraph detection. We exhibit the power of our framework by giving a simple explicit constructions that reproduce both the $O(n^{35/27})$ and $O(n^{9/7})$  learning graph upper bounds (up to logarithmic factors) for triangle finding, and discuss how other known upper bounds in the original learning graph framework can be converted to algorithms in our framework. We hope that the ease of use of this framework will lead to the discovery of new upper bounds.

\end{abstract}

\section{Introduction}
\subsection{Background}
Shenvi, Kempe, and Whaley~\cite{ShenviKW03} first
pointed out the algorithmic potential of quantum walks by designing a walk based
algorithm to emulate Grover search~\cite{gro96}. 
The first algorithm using quantum walks that goes beyond the capability of
Grover search is due to Ambainis~\cite{amb04} for element distinctness.  
In his seminal paper he resolved the quantum query complexity of the problem, settling a difficult question that had been open for several
years~\cite{bdh05,AS04}.
Based on Ambainis' quantum walk, Szegedy~\cite{sze04} developed a general theory of quantum walk based algorithms.
He constructed the quantum analogue $W(P)$ of any symmetric, ergodic Markov chain $P$,
and designed a quantum search algorithm based on the use of $W(P)$.
Later, this was generalized to reversible, ergodic Markov chains $P$,
and the construction simplified by the use of quantum phase estimation~\cite{MNRS11,mnrs12}.

After a series of new quantum walk algorithms with substantially better complexity in a variety of contexts \cite{BS06,MN07,MSS07,ACRSZ10,CK11,leg12},
a new framework was introduced by Belovs, known as learning graphs~\cite{bel11}.
Learning graphs are a model of computation related to span programs~\cite{rei09}.
Introduced by Reichardt in the model of quantum query algorithms, span programs have been used to characterize
quantum query complexity in terms of the negative adversary method~\cite{rei09,rei11,LMR+11}.
Whereas explicitly constructing a span program is difficult in practice, coming up with a learning graph is more easily accomplished.
Indeed, Belovs~\cite{bel11} gave a learning graph whose query complexity improved the best known quantum query algorithms
for triangle finding. 
Later on, other learning graphs were constructed for 
constant-size subgraph finding~\cite{zhu11,LMS11},
 and also for $k$-element distinctness~\cite{bl11,bel12}.
Recently, a simpler and more efficient learning graph was proposed for triangle finding,
together with a new application to associativity testing~\cite{LMS12}.

In order to get an explicit quantum algorithm, one has to transform a learning graph into a span program, and then to an algorithm.
This transformation is generic but incorporates many technical difficulties. The resulting algorithm is quite complicated. 
For instance, the time and space complexities may not be optimal. For specific problems, such as $st$-connectivity~\cite{br12}, 
one can find explicit implementations of span programs.
Still, one would like to have a framework which leads easily to the explicit algorithm and its complexity 
(not only query but also time and space, for instance), as is the case in the setting of search algorithms based on quantum walks.

\subsection{Quantum walk based search algorithms}
At an abstract level, many search problems may be cast as the problem of finding a ``marked'' element from a set~$\Omega$ with~$n$ elements. 
Let~$M\subseteq \Omega$ be the set of marked elements. 
Assume that the search algorithms maintain a data structure~$\mathcal{D}$ that associates some
data~$\mathcal{D}(u)$ with every state $u \in \Omega$.  {From} $\mathcal{D}(u)$, we would like to determine if $u \in M$.  
In this framework, Grover search, and more generally amplitude amplification, consists of the following procedure,
where $\pi$ is some given probability distribution on $\Omega$.
\begin{enumerate}
\item Set up a superposition $\ket{\pi}=\sum_u\sqrt{\pi_u}\ket{u}$ of elements $u\in \Omega$\\
Compute $\mathcal{D}$, that is $\ket{u}\mapsto \ket{u}\ket{\mathcal{D}(u)}$, for all elements $u$ in the current superposition
\item Iterate ${\frac{1}{\sqrt\eps}}$ times, where $\eps=\Pr_\pi (u\in M)$.
\begin{enumerate}
\item Flip marked elements, that is $\ket{u}\ket{\mathcal{D}(u)}\mapsto -\ket{u}\ket{\mathcal{D}(u)}$, if $u\in M$
\item Uncompute $\mathcal{D}$, that is $\ket{u}\ket{\mathcal{D}(u)}\mapsto \ket{u}$\\
Apply a reflection through $\ket{\pi}$\\
Compute $\mathcal{D}$, that is $\ket{u}\mapsto \ket{u}\ket{\mathcal{D}(u)}$, for all elements $u$ in the current superposition
\end{enumerate}
\item Output $u$ if it is marked, otherwise output `no marked element'
\end{enumerate}

When operating with $\mathcal{D}$,  we distinguish two types of cost.
\begin{itemize} 
\item[] \textbf{Setup cost $\mathsf{S}$:}
The cost of creating $\ket{\pi}$ and of constructing the data structure~$\mathcal{D}(u)$ for the state~$u$.
\item[] \textbf{Checking cost $\mathsf{C}$:} The cost of checking
  if~$u \in M$ using~$\mathcal{D}(u)$.
\end{itemize}
These costs may be thought of as vectors listing all the measures of complexity of interest, such as query and time complexity. 
Then the above algorithm has cost of order ${\frac{1}{\sqrt\eps}}(\mathsf{S}+\mathsf{C})$.
This above approach is a priori only valid when the probability $\Pr_\pi (u\in M)$ is known in
advance.  This assumption may be removed using standard techniques, without increasing the asymptotic complexity of the
algorithms~\cite{BBHT98}. Indeed, if only a lower bound $\eps > 0$ on $\Pr_\pi (u\in M)$ is known for non-empty~$M$, then the above argument  
can be modified in order to determine if $M$ is empty or find an element of $M$.

A more cost-effective approach is to implement the reflection through $\ket{\pi}$ using the quantum analogue $W(P)$
of a reversible and ergodic walk $P$ on a given graph $G$. 
In this way, we can ensure that $u$ has only small changes at each step, permitting us to update its data structure $\mathcal{D}(u)$ without uncomputing/computing it.
We then introduce a third type of cost.
\begin{itemize}
\item[] \textbf{Update cost $\mathsf{U}$:}
The cost of simulating one application of $W(P)$ 
according to the Markov chain~$P$,
and of updating $\mathcal{D}(u)$ to~$\mathcal{D}(v)$ where $(u,v)\in G$.
\end{itemize}
Then finding a marked element, if there is one, can be done with cost of order
$\mathsf{S}+{\frac{1}{\sqrt\eps}}({\frac{1}{\sqrt\delta}}\mathsf{U}+\mathsf{C})$,
where $\delta$ is the spectral gap of $P$ \cite{MNRS11}.

\subsection{Our contribution}
In this paper, we introduce a new model of quantum walks which can reproduce the best known learning graph upper bounds for triangle finding and related problems (up to log factors). 
Our framework is also simple and leads to explicit algorithms that are much simpler to analyze than learning graphs.

To achieve our goal, we introduce the notion of quantum data structures for nested quantum walks.
Surprisingly, only classical data structures were considered before for searching via quantum walk.
We demonstrate the potential of this framework by giving 
 explicit and simple quantum algorithms for triangle finding, achieving the same query complexity as their learning graph analogues up to polylogarithmic factors (\sec{extensions}) and more generally, reproducing all subgraph containment upper bounds of \cite{LMS12} up to polylogarithmic factors, and potentially improving on these upper bounds for certain graphs (\sec{subgraph}).

Concretely, assume that we introduce another quantum walk based search as a checking subroutine.
Let $P_1$ and $P_2$ be, respectively, the outer and inner walks, with respective costs $\mathsf{S}_1,\mathsf{U}_1,\mathsf{C}_1$
and $\mathsf{S}_2,\mathsf{U}_2,\mathsf{C}_2$, and parameters $\eps_1,\delta_1$ and $\eps_2,\delta_2$.
Since the checking cost of the outer walk can be expressed as 
$\mathsf{C}_1=O(\mathsf{S_2}+{\frac{1}{\sqrt\eps_2}}({\frac{1}{\sqrt\delta_2}}\mathsf{U_2}+\mathsf{C_2}))$
the global of cost of finding a marked element is of order (neglecting log factors)
$$ \mathsf{S}_1+{\frac{1}{\sqrt\eps_1}}\left({\frac{1}{\sqrt\delta_1}}\mathsf{U_1}+\mathsf{S_2}+{\frac{1}{\sqrt\eps_2}}\left({\frac{1}{\sqrt\delta_2}}\mathsf{U_2}+\mathsf{C_2}\right)\right).$$
We will show that allowing a quantum data structure for the outer walk, one can push the setup cost $\mathsf{S}_2$ of the inner walk to the outer one. In order to accomplish this, we include in the data structure $\mathcal{D}(u)$, for each state $u$ of the outer walk, the initial state for the inner quantum walk. It is then necessary that the initial state for the inner walk be properly preserved by the outer update, as well as the inner walk, so that we needn't perform the inner setup multiple times. In equations, this leads to an improved cost of order (\thm{twolevel}):
$$ \mathsf{S}_1+\mathsf{S_2}+{\frac{1}{\sqrt\eps_1}}\left({\frac{1}{\sqrt\delta_1}}\mathsf{U_{1}'}+{\frac{1}{\sqrt\eps_2}}\left({\frac{1}{\sqrt\delta_2}}\mathsf{U_2}+\mathsf{C_2}\right)\right),$$
 where some logarithmic factors are now omitted. The new update cost $\mathsf{U}_{1}'$ is the cost for updating the two data structures
 when one step of the outer walk is performed. In our application, $\mathsf{U}_{1}'$ will be of the same order as $\mathsf{U}_1$. Note that the update of the inner walk may be performed much more often than the update of the outer walk. Some of the power of our nested framework comes from keeping expensive operations in the outer update, and pushing only cheaper operations into the inner update.

Our very simple idea seems for now as powerful as most learning graph based 
algorithms~\cite{bel11,bl11,zhu11,LMS11,LMS12}, 
except the recent one for $k$-element distinctness~\cite{bel12}, which deviates significantly from the original learning graph framework. 
Nonetheless, we conjecture that one might use our framework, possibly with further extensions, to design a similar algorithm.

The idea is a posteriori very natural. Let us make an analogy with the adversary method. The method was first designed with positive weights
by Ambainis~\cite{amb00}. Only several years later, the possibility to put negative weight was proven to be not only possible but useful. 
Later, using the dual of this method, span programs were proven to 
achieve optimal quantum query complexity.
In a more modest direction, we hope that our contribution will lead to a better understanding of the power of quantum walks and learning graphs.

While quantum walks are a powerful computational primitive, our main motivation for extending the quantum walk framework to include recent learning graph algorithms was not to demonstrate that fact. We believe that the framework will be useful for designing new quantum algorithms, since it allows the use of intuition  gained from experience with quantum and random walks. Some ideas, like function composition and subroutines, are very natural in the quantum walk framework, but not as straightforward in the learning graph framework. We hope that the different perspectives provided by the quantum walk framework and the learning graph framework will lead to the discovery of new quantum algorithms.

\section{Preliminaries}
\label{sec:prelim}
\subsection{Model and notation}
Our framework  applies to other notions of complexity such as time and space. Nonetheless we will only illustrate it within the model of query complexity.
In the model of quantum query complexity, the input string $x$ is accessed by querying a black box to learn the bits of $x$. More concretely, the black box performs the following unitary operation: $\mathcal{O}_x: |j,b\> \mapsto |j,b\oplus x_j\>$. The quantum query complexity of a quantum algorithm is the number of times the algorithm uses the operator  $\mathcal{O}_x$. In this paper, $x$ will always refer to the input string. 
For a more thorough introduction to the quantum query model, see \cite{BBCMdW01}.

For problems where the input is a graph, such as the triangle finding problem, $x$ will be the adjacency matrix of an undirected graph. The triangle finding problem asks if the input graph contains a \emph{triangle}, i.e., three pairwise adjacent vertices. A vertex is called a \emph{triangle vertex} if it is part of a triangle. Similarly, an edge is called a \emph{triangle edge} if it is part of a triangle. The best known triangle finding algorithms solve, as a subroutine, the graph collision problem. In the graph collision problem parametrized by a graph $G$ on $n$ vertices, the input is a binary string of length $n$ that indicates whether a vertex in $G$ is marked or not. The objective is to determine whether there exist two adjacent vertices in $G$ that are both marked. Note that the graph $G$ is part of the problem specification, and not part of the input.

We will use the $\tilde{O}$ notation to suppress logarithmic factors. More precisely, $f(n) = \tilde{O}(g(n))$ means $f(n) = O(g(n) \log^k n)$ for some integer $k$.

Given any algorithm computing a function $f(x)$ with probability at least $2/3$, we can repeat this algorithm $O\(\log\(\frac{1}{\epsilon}\)\)$ times and take the majority vote of outcomes to boost the algorithm's success probability to $1-\epsilon$. Since all the subroutines in this paper are used at most $\textrm{poly}(n)$ times, if we boost a bounded error subroutine's success probability to $1-1/\textrm{poly}(n)$, then the final algorithm will still be correct with bounded error. This adds one log factor to the complexity of the subroutine; this is the source of the log factors in this paper. We will also use this technique to boost the success probability of quantum algorithms that do not output a bit, but instead flip the phase when $f(x)$ is 1. By standard techniques, such an algorithm can be converted to one that computes $f(x)$ in a separate register. Now we can repeat this algorithm logarithmically many times, flip the phase conditioned on the majority vote of outcomes and uncompute the answers to obtain an algorithm whose success probability has been boosted to $1-1/\textrm{poly}(n)$.

The Johnson graph $J(n,r)$ is a graph on $\binom{n}{r}$ vertices, where each vertex is an $r$-subset of $\{1,2,\ldots,n\}$, and two vertices are adjacent if their symmetric difference contains exactly two elements. The spectral gap of $J(n,r)$ is $\Omega(1/r)$.

Let $G=(V,E)$ be a graph, and let $R,S\subseteq V$ and $F$ a set of edges on $V$, not necessarily contained in $E$. We denote by $G_{R}$  the subgraph of $G$ induced by $R$, and by $G_{R,S}$ the bipartite subgraph of $G$ induced by classes $R$ and $S$. That is, $G_{R,S}$ has vertex set $R\cup S$ and edges set $R\times S\cap E$. Denote by $G(F)$ the subgraph of $G$ obtained by restricting to edges in $F$. That is, $G(F)$ has vertex set $V$ and edges set $E\cap F$.

\subsection{Review of the quantum walk framework}

We now review the quantum walk search framework of Magniez, Nayak, Roland and Santha~\cite{MNRS11}. We present a slight modification of the framework, which is tailored to our application. We discuss the minor differences between our presentation and the original framework at the end of this section.

Let $P$ be a reversible Markov chain on (finite) state space $\Omega$, with stationary distribution $\pi$. Let $\mathcal{D}_x:\Omega\rightarrow \mathcal{H}_D$, for some Hilbert space $\mathcal{H}_D$, be a mapping that associates data with a state of the Markov chain for input $x$ (i.e., for $u \in \Omega$, $|\mathcal{D}_x(u)\>$ is the data associated with state $u$ for input $x$). We will refer to $\{|\mathcal{D}_x(u)\>\}_{u \in \Omega}$ as the data structure associated with the Markov chain.

In all previous applications of the quantum walk search framework, the data structure has always been classical, i.e., $\ket{\mathcal{D}_x(u)}$ is a classical basis state for all $u \in \Omega$, however, this needn't be the case in general. In our framework, we make use of quantum data structures.

\paragraph{Setup} Define $\mathcal{H}_L\cong\mathcal{H}_R\cong\mathcal{H}_\Omega\colonequals \span\{\{\ket{0}\}\cup\{\ket{u}:u\in\Omega\}\}$.
Define the initial state $\ket{\pi^x}$ as $\ket{\pi^x}\colonequals\sum_{u\in\Omega}\sqrt{\pi_u}\ket{u}_L\ket{0}_R\ket{\mathcal{D}_x(u)}_D$
and let $\mathsf{S}=\mathsf{S}(P,\mathcal{D}_x)$ be the cost of constructing $\ket{\pi^x}$. We call $\mathsf{S}$ the \emph{setup cost}.

\paragraph{Update}
Define two unitary operators $\mathcal{U}_P$ and $\mathcal{U}_D$ with the following action for all $u\in\Omega$:
$$\mathcal{U}_P:\ket{u}_L\ket{0}_R\mapsto \ket{u}_L\sum_{v\in\Omega}\sqrt{P_{uv}}\ket{v}_R,
\quad\text{and}\quad 
\mathcal{U}_D:\ket{u}_L\ket{v}_R\ket{\mathcal{D}_x(u)}\mapsto \ket{u}_L\ket{v}_R\ket{\mathcal{D}_x(v)}.$$
In \cite{MNRS11}, these two operations are defined as one, however, we find it beneficial to consider them separately, since they play different roles: $\mathcal{U}_P$ updates the walk according to the Markov chain $P$, while $\mathcal{U}_D$ updates the data structure. Since the Markov chain is independent of the input in all of our applications, the query complexity of implementing $\mathcal{U}_P$ is 0, although its time complexity is non-zero.

Let $\mathsf{U}=\mathsf{U}(P,\mathcal{D}_x)$ denote the cost of implementing $\mathcal{U}_D\mathcal{U}_P$. We refer to $\mathsf{U}$ as the \emph{update cost}.

\paragraph{Checking}
Fix some set $M_x\subseteq \Omega$ of marked elements. Let $\mathsf{C}=\mathsf{C}(M_x)$ be the cost of implementing the unitary that acts as
$$\ket{u}_\Omega\ket{\mathcal{D}_x(u)}_D\mapsto\left\{\begin{array}{ll}
-\ket{u}_\Omega\ket{\mathcal{D}_x(u)}_D & \mbox{if }u\in M_x\\
\ket{u}_\Omega\ket{\mathcal{D}_x(u)}_D & \mbox{else}\end{array}\right..$$
We refer to $\mathsf{C}$ as the \emph{checking cost}.
With these definitions, we are ready to state the main theorem of \cite{MNRS11}.

\begin{theorem}
\label{thm:mnrs1}
Let $\delta>0$ be the eigenvalue gap of $P$. Let $\eps$ be such that for any $x$, if $M_x$ is non-empty, then $\pi(M_x)\geq\eps$. Then there is a quantum algorithm with cost 
$$O\(\frac{1}{\sqrt{\eps}}\left(\frac{1}{\sqrt{\delta}}\mathsf{U}+\mathsf{C}\right)\)$$
that, when given the initial state $\ket{\pi^x}$, determines with bounded error whether $M_x$ is non-empty. More precisely, the algorithm implements (with bounded error) the map
$$\ket{\pi^x}\mapsto\left\{\begin{array}{ll}
-\ket{\pi^x} & \mbox{if } M_x\neq\emptyset\\
\ket{\pi^x} & \mbox{else}\end{array}\right. .$$
\end{theorem}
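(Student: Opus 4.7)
The plan is to follow the Szegedy/MNRS strategy of combining amplitude amplification with a quantum walk to implement a phase flip on $\ket{\pi^x}$. The algorithm alternates two reflections starting from $\ket{\pi^x}$: an exact reflection $R_M$ through the unmarked subspace, and an approximate reflection $R_\pi$ through $\ket{\pi^x}$. Since $\|\Pi_{M_x}\ket{\pi^x}\|^2 = \pi(M_x)\ge \eps$ whenever $M_x$ is non-empty, $O(1/\sqrt{\eps})$ iterations of $R_\pi R_M$ will rotate the state to approximately $-\ket{\pi^x}$, while if $M_x$ is empty the state $\ket{\pi^x}$ is fixed by both reflections and is returned unchanged.

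First I would implement $R_M$. By definition the checking unitary of cost $\mathsf{C}$ already applies the map $\ket{u}\ket{\mathcal{D}_x(u)}\mapsto (-1)^{[u\in M_x]}\ket{u}\ket{\mathcal{D}_x(u)}$, and because the $L$ and $R$ registers live in $\mathcal{H}_\Omega$ with the convention $\ket{\pi^x}$ has $R$-register equal to $\ket 0$, the reflection on the $L$-register extends trivially. So $R_M$ costs $\mathsf{C}$.

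Next I would construct $R_\pi$ from Szegedy's walk operator $W(P)$, built from $\mathcal{U}_P$, $\mathcal{U}_D$, and the swap between the $L$ and $R$ registers. The key structural fact is that the bipartite ``doubled'' chain has $\ket{\pi^x}$ as its unique $1$-eigenvector in the relevant invariant subspace, and all other eigenvalues of $W(P)$ in this subspace have phases bounded away from $0$ by $\Omega(\sqrt{\delta})$ (Szegedy's spectral correspondence, which carries over verbatim to the quantum-data-structure setting because the data register is a passive tensor factor under $\mathcal{U}_D\mathcal{U}_P$). Running phase estimation on $W(P)$ to precision $O(\sqrt{\delta})$, flipping the phase iff the estimated eigenphase is non-zero, and uncomputing the estimate yields an approximate reflection $R_\pi$ with error $O(1/\text{poly}(1/\sqrt\eps))$ per use at the cost of $O((1/\sqrt{\delta})\,\mathsf{U})$ (absorbing a log factor into the $\tilde O$ / into the boosting discussed in \sec{prelim}).

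The hard part, as usual, is controlling the accumulated error of the approximate $R_\pi$. The plan is to invoke the ``effective spectral gap'' / approximate reflection analysis of MNRS: if each $R_\pi$ is $\eta$-close to the ideal reflection in operator norm on the relevant two-dimensional subspace, then after $T=O(1/\sqrt\eps)$ iterations the output state is within $O(T\eta)$ of the ideal amplitude-amplified state. Choosing the phase-estimation precision so that $\eta = O(\sqrt\eps/\log(1/\eps))$ gives overall bounded error and costs only an extra logarithmic factor in $\mathsf{U}$, absorbed into the stated bound. Combining the two reflections over $O(1/\sqrt\eps)$ rounds gives the claimed total cost $O\!\left(\tfrac{1}{\sqrt\eps}\!\left(\tfrac{1}{\sqrt\delta}\mathsf{U}+\mathsf{C}\right)\right)$ and the stated phase-flip behaviour on $\ket{\pi^x}$.
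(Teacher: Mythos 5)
The paper itself does not prove \thm{mnrs1}: it imports the result wholesale from \cite{MNRS11}, remarking only that allowing a quantum data register (and merging the two data registers into one) does not affect the argument there. Your sketch reconstructs precisely the proof being cited --- amplitude amplification alternating the checking reflection with an approximate reflection about $\ket{\pi^x}$ obtained from phase estimation on the Szegedy walk operator, using the $\Omega(\sqrt{\delta})$ phase gap --- so you are on the same route as the paper's source, and you also address the one point the paper explicitly flags, namely that quantum data does not break the spectral analysis. Two caveats. First, ``the data register is a passive tensor factor'' is too loose: $\ket{\mathcal{D}_x(u)}$ depends on $u$, so the data-augmented walk is not a tensor product with an idle register. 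The correct statement is that the isometry $\ket{u}\mapsto\ket{u}\ket{\mathcal{D}_x(u)}$ intertwines the ideal walk $W(P)$ with the operator built from $\mathcal{U}_D\mathcal{U}_P$ and the $L$/$R$ swap, which is exactly why $\mathcal{U}_D$ is required to map $\ket{\mathcal{D}_x(u)}$ to $\ket{\mathcal{D}_x(v)}$ coherently; with that said, Szegedy's spectral correspondence does transfer. Second, your error budget (precision $\eta\approx\sqrt{\eps}$ per approximate reflection, union bound over $O(1/\sqrt{\eps})$ iterations) yields a $\log(1/\eps)$ overhead on the $\frac{1}{\sqrt{\delta}}\mathsf{U}$ term, i.e.\ it proves the statement only up to a logarithmic factor, whereas the theorem as stated (following \cite{MNRS11}) is a clean $O\!\left(\frac{1}{\sqrt{\eps}}\left(\frac{1}{\sqrt{\delta}}\mathsf{U}+\mathsf{C}\right)\right)$; removing that log requires the sharper MNRS handling of imperfect reflections rather than the naive accumulation bound. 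Since every downstream use in this paper (\thm{twolevel} onward) is stated with $\tilde{O}$, the slack is harmless for the applications, but it should be acknowledged as a slightly weaker conclusion than the one claimed.
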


We will use of the above theorem, which is proven, though not explicitly stated in these terms, in \cite{MNRS11}. The explicit theorem of \cite{MNRS11} follows directly from this theorem, and is stated in \cor{mnrs}.

\begin{corollary}
\label{cor:mnrs}
There is a quantum algorithm with cost
$$O\(\mathsf{S}+\frac{1}{\sqrt{\eps}}\left(\frac{1}{\sqrt{\delta}}\mathsf{U}+\mathsf{C}\right)\)$$
that determines with bounded error whether $M_x$ is non-empty.
\end{corollary}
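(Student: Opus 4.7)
The plan is straightforward composition: run the setup to produce $\ket{\pi^x}$, then apply the algorithm from \thm{mnrs1}. First, I would invoke the setup at cost $\mathsf{S}$, producing the state $\ket{\pi^x}$. Then, using this state as input, I would run the algorithm guaranteed by \thm{mnrs1} at cost $O\(\frac{1}{\sqrt\eps}\(\frac{1}{\sqrt\delta}\mathsf{U}+\mathsf{C}\)\)$. Adding the two contributions gives exactly the claimed bound $O\(\mathsf{S}+\frac{1}{\sqrt\eps}\(\frac{1}{\sqrt\delta}\mathsf{U}+\mathsf{C}\)\)$.

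Since \thm{mnrs1} delivers a conditional phase flip rather than a classical bit, the only nontrivial step is extracting the answer. To do so, I would use a standard Hadamard test with a controlled version of the unitary $V$ from \thm{mnrs1}: prepare an ancilla in $\frac{1}{\sqrt{2}}(\ket{0}+\ket{1})$, prepare the main register in $\ket{\pi^x}$ via the setup, apply controlled-$V$, Hadamard the ancilla, and measure it. Because $V\ket{\pi^x}=(-1)^{[M_x\neq\emptyset]}\ket{\pi^x}$, the two cases yield orthogonal ancilla outcomes $\ket{0}$ and $\ket{1}$, so the measured bit is exactly the indicator we want. Making $V$ controlled costs only a constant factor in the asymptotic cost model (each subroutine call can be controlled individually, and the oracle queries in particular admit a trivially controlled version).

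The only minor obstacle is bounded-error composition. The setup and the implementation of $V$ are each only bounded-error, so naively chaining them could degrade the success probability of the final measurement. This is handled by constant-factor amplification of each subroutine to error well below a fixed small constant, using the repetition-and-majority-vote technique already described in \sec{prelim}. Since the amplification is only by a constant, it is absorbed into the $O$-notation, and the Hadamard test then correctly identifies whether $M_x$ is non-empty with probability at least $2/3$, completing the proof.
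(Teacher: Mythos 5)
Your proposal is correct and matches the paper's (implicit) argument: the corollary is obtained simply by preparing $\ket{\pi^x}$ at cost $\mathsf{S}$ and then invoking \thm{mnrs1}, with constant-factor error amplification absorbed in the $O$-notation. The Hadamard-test extraction you add is a fine way to turn the phase-flip formulation into a decision, though it is not strictly needed since \thm{mnrs1} already asserts that, given $\ket{\pi^x}$, the algorithm determines with bounded error whether $M_x$ is non-empty.
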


This presentation differs from the original presentation in some aspects. First, we allow the data structure to be quantum, while the original presentation implicitly assumed it to be classical. This, however, does not affect the proof. Second, the original presentation had a data structure associated with both registers of the walk, whereas we have only one data structure. As mentioned in~\cite{MNRS11}, this is only for convenience. Third, we explicitly state the cost of the algorithm as a sum of two costs, the setup cost and the cost of running the walk. This is consistent with the original algorithm.

\section{Two-level nested quantum walks}
\label{sec:framework}

In this section we show how to augment the above framework with nested quantum walks using a quantum data structure. As motivation for this framework, 
it will be useful to compare the quantum walk algorithm of Magniez, Santha and Szegedy for triangle finding~\cite{MSS07}, which has query complexity $\tilde{O}(n^{1.3})$, with the learning graph algorithm of Belovs~\cite{bel11}, which has query complexity $O(n^{35/27}) = O(n^{1.297})$.
In \sec{twolevel}, we formally state the general construction for the case of two-level nested quantum walks. In \sec{gen}, we explain how to generalize to $k$  levels of nesting.

\subsection{Warming up}\label{sec:warmingup}

\paragraph{Quantum walk algorithm for triangle finding}
In the quantum walk algorithm of \cite{MSS07}, a quantum walk is used to search for a pair of triangle vertices. The walk is on the Johnson graph of sets of $r$ vertices,  $J(n,r)$, which has spectral gap $\delta=\Omega\left(\frac{1}{r}\right)$. A state is marked if it contains two vertices of a triangle, so the proportion of marked states is lower bounded by $\eps=\Omega\left(\frac{r^2}{n^2}\right)$. 

If $G$ denotes the input graph, the data associated with a state $R$ of the Johnson graph is $\ket{\mathcal{D}_G(R)} = \ket{G_R}$, which encodes
the subgraph of $G$ induced by the vertices in $R$. The setup cost is $\mathsf{S}=O(r^2)$, since, for each state, we must initially query all $\binom{r}{2}$ possible edges in the subgraph induced by $R$. The update cost is $\mathsf{U} = O(r)$, since, at each step of the walk, we remove a vertex $i$ and add a new vertex $i'$ to the set of vertices, and must therefore test the adjacency of $i'$ with all remaining vertices in the set. 

Finally, to check whether we have two triangle vertices in a state $R$, we search for a vertex $k$ that forms a triangle with two of the vertices in $R$. To check if a fixed vertex $k$ forms a triangle with any two of the vertices in $R$, we solve the graph collision problem on $G_R$, which is encoded in $\mathcal{D}_G(R)$, with vertices in $R$ defined as being marked if they are adjacent to $k$. Thus, we can check if a vertex $k$ forms a triangle with two vertices in $R$ using $O(r^{2/3})$ queries, and so we can check if any vertex forms a triangle with two vertices in $R$ using $\mathsf{C}= O(\sqrt{n}r^{2/3})$ queries.

Plugging all of these values into the formula from \cor{mnrs} gives (neglecting logarithmic factors)
$$\begin{array}{rcl}\mathsf{S}+\frac{1}{\sqrt{\eps}}\left(\frac{1}{\sqrt{\delta}}\mathsf{U}+\mathsf{C}\right) &=& r^2+\frac{n}{r}\left(\sqrt{r}r+\sqrt{n}r^{2/3}\right)\\
&=&r^2+n\sqrt{r}+\frac{n^{3/2}}{r^{1/3}},\end{array}$$
which is optimized by setting $r=n^{3/5}$, yielding query complexity $\tilde{O}(n^{1.3})$ for triangle finding.

\paragraph{Learning graph algorithm for triangle finding}
In contrast, the recent learning graph algorithm of Belovs~\cite{bel11} for triangle finding considers sets of $r$ vertices, but only queries a sparsification of the possible edges between them. 
If the sparsification parameter is $s\in (0,1]$, then each potential edge is queried with probability $s$, giving an expected $s\binom{r}{2}$ queries for each state. 

We can try to apply this edge sparsification idea to the above quantum walk. We will redefine the data structure to be a random sparsification of the potential edges in $G_R$, so that the setup cost decreases to $\mathsf{S}=O(sr^2)$ and similarly the update cost decreases to $\mathsf{U}=O(sr)$ (let us assume that the setup and update operations can be performed with their average cost, instead of their worst-case cost).  However, now some states that contain two triangle vertices will not contain the edge between them, and will therefore not be detected by the checking procedure. Thus, the probability that a state is marked has also decreased to $\eps = \Omega(s\frac{r^2}{n^2})$. If we plug in these new values, we get query complexity 
$$\begin{array}{rcl}\mathsf{S}+\frac{1}{\sqrt{\eps}}\left(\frac{1}{\sqrt{\delta}}\mathsf{U}+\mathsf{C}\right)&=&sr^2+\frac{n}{r\sqrt{s}}\left(\sqrt{r}sr+\sqrt{n}r^{2/3}\right)\\
&=&sr^2+n\sqrt{rs}+\frac{n^{3/2}}{r^{1/3}\sqrt{s}},\end{array}$$
which is optimized by $s=1$ and $r=n^{3/5}$, yielding no improvement over the standard quantum walk algorithm. 

One key idea on which our framework is based, is to observe that a state $R$ that contains two triangle vertices should be considered marked, even if its data structure is missing the edge between them. Of course, we need that edge to know that the state is marked, but we can check this by searching for a \emph{good} data structure, i.e., one containing the triangle edge. To this end, we allow the data structure of a state $R$ to be a quantum superposition over a number of classical data structures, some of which may contain some kind of certificate for $R$, if $R$ is in fact marked. We walk on this data structure as a checking subroutine. Importantly, the setup of this inner walk, that is, the creation of an appropriate superposition of data structures, has already been taken care of as part of the setup of the outer walk. It is very important that we only perform this setup once, as opposed to at every checking step. 

In \sec{extensions}, we use our framework to effectively apply the sparsification idea to triangle finding in a nested quantum walk. 
We first present the general framework for nesting quantum walks with a quantum data structure. We start with nesting one walk within another. 

\subsection{Framework for nested quantum walks}
\label{sec:twolevel}

Let $P$ be a reversible Markov chain on (finite) state space $\Omega$, with stationary distribution $\pi$. For each input $x$, fix a marked set $M_x\subseteq\Omega$. Let $\mathcal{D}_x:\Omega\rightarrow \mathcal{H}_{D}$ be some associated data mapping. The Markov chain $P$ defines what we refer to as the \emph{outer walk}. 

Fix $u\in \Omega$. We will define what we refer to as the \emph{inner walk}, whose purpose is to check whether $u$ is in $M_x$. 
Let $P'$ be a reversible Markov chain on state space $\Omega'$ with stationary distribution $\pi'$.  Let $M_x^u\subseteq\Omega'$ be the marked set for the inner walk, with respect to $u$. Let $\mathcal{D}_x^u:\Omega'\rightarrow \mathcal{H}_{D'}$ be the inner data mapping with respect to $u$. Note that, although we have the walk itself, defined by $P'$, $\Omega'$ and $\pi'$, independent of $u$, the marked set and data mapping are dependent on $u$.

We suppose the following relationship between the inner and outer marked sets:
$$M_x=\{u\in\Omega:M_x^u\neq\emptyset\}.$$
We can then view the set $M_x^u$ as a set of witnesses for $u\in M_x$.

In order to make use of the quantum data structure, we suppose the following relationship holds for the inner and outer data structures. First, we require $\mathcal{H}_{D} = \mathcal{H}_{\Omega'}\otimes\mathcal{H}_{D'}$, and in particular, we need
$$\ket{\mathcal{D}_x(u)}=\sum_{s\in\Omega'}\sqrt{\pi'_s}\ket{s}\ket{\mathcal{D}_x^u(s)}.$$
That is, $\ket{\mathcal{D}_x(u)}$ is the initial state of the inner walk with respect to $u$. In general, the data structure for the outer walk could store more information than just the initial state of the inner walk, but this is all we need for our applications.

\paragraph{Setup} Define spaces $\mathcal{H}_{L}\cong\mathcal{H}_{R}\cong\mathcal{H}_{\Omega}$ and $\mathcal{H}_{L'}\cong\mathcal{H}_{R'}\cong\mathcal{H}_{\Omega'}$. The setup cost, $\mathsf{S}=\mathsf{S}(P,\mathcal{D}_x)$, is the cost of constructing the initial state:
\begin{eqnarray}
\ket{\pi^{x}} & = & \sum_{u\in\Omega}\sqrt{\pi_u}\ket{u}_{L}\ket{0}_{R}\ket{\mathcal{D}_x(u)}_{D}\ket{0}_{R'}\nonumber \\
& = & \sum_{u\in\Omega}\sqrt{\pi_u}\ket{u}_{L}\ket{0}_{R}\sum_{s\in\Omega'}\sqrt{\pi'_s}\ket{s}_{L'}\ket{\mathcal{D}_x^u(s)}_{D'}\ket{0}_{R'}.\nonumber
\end{eqnarray}
We emphasize that the initial state for the outer walk contains, in superposition, initial states for the inner walks with respect to each $u\in\Omega$. 

\paragraph{Update} The update cost, $\mathsf{U}=\mathsf{U}(P,\mathcal{D}_x)$, is the cost of implementing $\mathcal{U}_D\mathcal{U}_P$. Recall that $\mathcal{U}_P$ is defined by the action $\mathcal{U}_P:\ket{u}_{L}\ket{0}_{R}\mapsto \ket{u}_{L}\sum_{v\in\Omega}\sqrt{P_{u,v}}\ket{v}_{R}$. 
Similarly, $\mathcal{U}_D$ is defined by the action
$$\mathcal{U}_D:\ket{u}_{L}\ket{v}_{R}\ket{\mathcal{D}_x(u)}_{D}\mapsto \ket{u}_{L}\ket{v}_{R}\ket{\mathcal{D}_x(v)}_{D}.$$
This mapping is equivalently stated as:
\begin{eqnarray*}
\mathcal{U}_D:\ket{u}_{L}\ket{v}_{R}\sum_{s\in\Omega'}\sqrt{\pi'_s}\ket{s}_{L'}\ket{\mathcal{D}_x^u(s)}_{D'} &\mapsto & 
\ket{u}_{L}\ket{v}_{R}\sum_{s\in\Omega'}\sqrt{\pi'_s}\ket{s}_{L'}\ket{\mathcal{D}_x^v(s)}_{D'},
\end{eqnarray*}
which can be implemented by an operation with the action:
$$\ket{u}_{L}\ket{v}_{R}\ket{\mathcal{D}_x^u(s)}_{D'}\mapsto \ket{u}_{L}\ket{v}_{R}\ket{\mathcal{D}_x^v(s)}_{D'}.$$ 
The most important point to emphasize about this outer update is that we must update from the inner initial state with respect to some outer state $u$ (contained in $\ket{\mathcal{D}_x(u)}$), to the inner initial state with respect to some other outer state $v$ (contained in $\ket{\mathcal{D}_x(v)}$).

\paragraph{Checking} To check whether a state $u$ is in $M_x$, we want to implement the reflection
$$\mathcal{C}:\ket{u}\ket{\mathcal{D}_x(u)}\mapsto \left\{\begin{array}{ll}
-\ket{u}\ket{\mathcal{D}_x(u)} & \mbox{if }u\in M_x\\
\ket{u}\ket{\mathcal{D}_x(u)} & \mbox{else}\end{array}\right.,$$
which, in this case, is:
\begin{eqnarray*}
\mathcal{C}:\ket{u}\sum_{s\in\Omega'}\sqrt{\pi'_s}\ket{s}\ket{\mathcal{D}_x^u(s)}&\mapsto &
\left\{\begin{array}{ll}
\ket{u}\left(-\sum_{s\in\Omega'}\sqrt{\pi'_s}\ket{s}\ket{\mathcal{D}_x^u(s)}\right) & \mbox{if }M_x^u\neq\emptyset\\
\ket{u}\left(\sum_{s\in\Omega'}\sqrt{\pi'_s}\ket{s}\ket{\mathcal{D}_x^u(s)}\right) & \mbox{else}\end{array}\right. .
\end{eqnarray*}
This is simply a reflection about the initial state of the inner walk with respect to $u$. Importantly, this preserves the inner initial state for future use, so we needn't set it up again. We accomplish this reflection 
by way of \thm{mnrs1} applied to the inner walk. Let $\mathsf{U}'=\mathsf{U}'(P',\mathcal{D}_x^u)$ be an upper bound (over all inputs $x$ and  states $u$) on the update cost of the inner walk. That is, $\mathsf{U}'$ is the cost of implementing
$$\mathcal{U}_P':\ket{s}_{L'}\ket{0}_{R'}\mapsto \ket{s}_{L'}\sum_{t\in\Omega'}\sqrt{P'_{st}}\ket{t}_{R'}
\quad\mbox{and}\quad
\mathcal{U}_D':\ket{s}_{L'}\ket{t}_{R'}\ket{\mathcal{D}_x^u(s)}_{D'}\mapsto \ket{s}_{L'}\ket{t}_{R'}\ket{\mathcal{D}_x^u(t)}_{D'}.$$

Let $\mathsf{C}'=\mathsf{C}'(M_x^u)$ be the checking cost of the inner walk, that is, the cost of implementing the reflection
$$\ket{s}\ket{\mathcal{D}_x^u(s)}\mapsto\left\{\begin{array}{ll}
-\ket{s}\ket{\mathcal{D}_x^u(s)} & \mbox{if }s\in M_x^u\\
\ket{s}\ket{\mathcal{D}_x^u(s)} & \mbox{else}\end{array}\right. .$$

Let $\delta'$ denote the spectral gap of $P'$ and $\eps'$ denote a lower bound on $\pi'(M_x^u)$ for non-empty $M_x^u$.

\begin{lemma}
Let $\widehat{\mathsf{C}}$ be the cost of implementing $\mathcal{C}$ with bounded error. Then $\widehat{\mathsf{C}}= O\left(\frac{1}{\sqrt{\eps'}}\left(\frac{1}{\sqrt{\delta'}}\mathsf{U}'+\mathsf{C}'\right)\right).$
\end{lemma}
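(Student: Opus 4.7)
The plan is to observe that $\mathcal{C}$ is exactly the reflection that \thm{mnrs1} produces when applied to the \emph{inner} walk, and then to invoke \thm{mnrs1} directly. Fix $u\in\Omega$ and write $\ket{\pi^{x,u}}\colonequals\sum_{s\in\Omega'}\sqrt{\pi'_s}\ket{s}_{L'}\ket{0}_{R'}\ket{\mathcal{D}_x^u(s)}_{D'}$ for the initial state of the inner walk with respect to $u$. By the assumed form of $\ket{\mathcal{D}_x(u)}$ together with the fresh $\ket{0}_{R'}$ that the outer state carries, the $(L',D',R')$-factor of $\ket{u}\ket{0}_R\ket{\mathcal{D}_x(u)}\ket{0}_{R'}$ is already $\ket{\pi^{x,u}}$, so no inner setup is required. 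Moreover, the prescribed action of $\mathcal{C}$ on this factor is exactly $\ket{\pi^{x,u}}\mapsto (-1)^{[M_x^u\neq\emptyset]}\ket{\pi^{x,u}}$, which is the map supplied by \thm{mnrs1}.

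First I would verify that the algorithm of \thm{mnrs1} can be run coherently with $u$ held in superposition on the outer $L$ register. The operators $\mathcal{U}_P'$, $\mathcal{U}_D'$ and the inner checker depend on $u$ only through $\mathcal{D}_x^u$ and $M_x^u$; since the costs $\mathsf{U}'$ and $\mathsf{C}'$ were defined as uniform upper bounds over all $u$, their $u$-controlled versions have the same asymptotic cost. All auxiliary registers used by the inner algorithm sit inside $L'$, $R'$, $D'$, which already belong to the outer data register, so no fresh ancilla becomes entangled with $u$.

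Next I would invoke \thm{mnrs1} applied to the inner walk with parameters $(P',\pi',\delta',\eps')$ and given initial state $\ket{\pi^{x,u}}$. The theorem produces the desired bounded-error reflection at cost $O\!\left(\tfrac{1}{\sqrt{\eps'}}\left(\tfrac{1}{\sqrt{\delta'}}\mathsf{U}'+\mathsf{C}'\right)\right)$, which is exactly the bound claimed on $\widehat{\mathsf{C}}$.

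The main subtlety, rather than a real obstacle, is that $\mathcal{C}$ must return the inner state to $\ket{\pi^{x,u}}$ with no leftover garbage, since this state will be needed both for subsequent outer steps and for the final outer reflection about $\ket{\pi^x}$. \thm{mnrs1} guarantees precisely this: it sends $\ket{\pi^{x,u}}\mapsto\pm\ket{\pi^{x,u}}$ (up to bounded error) and does not leave the inner registers in any auxiliary configuration. This is exactly the reason for embedding $\ket{\pi^{x,u}}$ inside $\ket{\mathcal{D}_x(u)}$ in the first place: the inner initial state is preserved and reused across checks instead of being reconstructed, so the inner setup cost is paid only once, during the outer setup.
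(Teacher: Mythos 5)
Your proposal is correct and follows the same route as the paper: the lemma is exactly an application of \thm{mnrs1} to the inner walk, using the fact that $\ket{\mathcal{D}_x(u)}$ is already the inner initial state with respect to $u$, so no inner setup cost is incurred. Your additional remarks on running the inner algorithm coherently over a superposition of $u$ and on the preservation of the inner initial state are sound elaborations of the same one-line argument.
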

\begin{proof}
This follows directly from \thm{mnrs1}, since $\ket{\mathcal{D}_x(u)}$ is the initial state of the inner walk (with respect to $u$).
\end{proof}

\begin{theorem}\label{thm:twolevel}
We can decide whether $M_x$ is non-empty, with bounded error, in quantum query complexity
$$\tilde{O}\left(\mathsf{S}+\frac{1}{\sqrt{\eps}}\left(\frac{1}{\sqrt{\delta}}\mathsf{U}+\frac{1}{\sqrt{\eps'}}\left(\frac{1}{\sqrt{\delta'}}\mathsf{U}'+\mathsf{C}'\right)\right)\right).$$
\end{theorem}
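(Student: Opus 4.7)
The plan is to compose the outer quantum walk (via Corollary~\ref{cor:mnrs}) with the inner walk's implementation of $\mathcal{C}$ (via the preceding lemma), treating the bounded-error reflection $\mathcal{C}$ as the outer checking subroutine. Concretely, the setup cost of the outer walk is $\mathsf{S}$, its update cost is $\mathsf{U}$, and its checking cost is $\widehat{\mathsf{C}} = O\bigl(\frac{1}{\sqrt{\eps'}}(\frac{1}{\sqrt{\delta'}}\mathsf{U}'+\mathsf{C}')\bigr)$. Plugging these into Corollary~\ref{cor:mnrs} yields exactly the claimed expression, up to the $\tilde{O}$. The only thing the framework requires that isn't immediate from Corollary~\ref{cor:mnrs} is that the checking subroutine be a genuine reflection that preserves $\ket{\mathcal{D}_x(u)}$, so it can be invoked repeatedly by the outer walk without re-doing the inner setup; this is why we apply Theorem~\ref{thm:mnrs1} (the reflection form) to the inner walk rather than the standalone algorithm in Corollary~\ref{cor:mnrs}, exploiting the fact that the outer data structure already carries the inner initial state $\sum_s \sqrt{\pi'_s}\ket{s}\ket{\mathcal{D}_x^u(s)}$.

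The main obstacle is error control. Theorem~\ref{thm:mnrs1} gives a reflection only with bounded one-sided error, and this reflection is invoked $O(1/\sqrt{\eps \delta})$ times inside the outer algorithm. A constant per-call error would propagate catastrophically. I would handle this by the standard amplification discussed in the preliminaries: convert the inner bounded-error phase-flip into a Boolean subroutine that writes $[M_x^u \neq \emptyset]$ into an ancilla, repeat it $O(\log n)$ times, take the majority, apply the phase conditioned on that bit, then uncompute. This pushes the per-invocation error to $1/\mathrm{poly}(n)$ at the price of one logarithmic factor on $\widehat{\mathsf{C}}$, which is absorbed by the $\tilde{O}$. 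A union bound over the $\mathrm{poly}(n)$ invocations (plus a triangle inequality on the operator norm deviation between the approximate and exact reflections) then shows that the composed algorithm differs from the idealized one by $o(1)$ in operator norm, preserving bounded error overall.

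Once this is in place, the proof is essentially a one-line composition: the outer algorithm produced by Corollary~\ref{cor:mnrs} with setup $\mathsf{S}$, update $\mathsf{U}$, and checking $\widehat{\mathsf{C}}$ has total cost
$$O\!\left(\mathsf{S}+\frac{1}{\sqrt{\eps}}\left(\frac{1}{\sqrt{\delta}}\mathsf{U}+\widehat{\mathsf{C}}\right)\right)
= O\!\left(\mathsf{S}+\frac{1}{\sqrt{\eps}}\left(\frac{1}{\sqrt{\delta}}\mathsf{U}+\frac{1}{\sqrt{\eps'}}\left(\frac{1}{\sqrt{\delta'}}\mathsf{U}'+\mathsf{C}'\right)\right)\right),$$
and after incorporating the amplification log factor we obtain the $\tilde{O}$ bound of the theorem. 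Notably, there is no $\mathsf{S}'$ term: the inner setup cost has been folded into $\mathsf{S}$ via the requirement $\ket{\mathcal{D}_x(u)} = \sum_s \sqrt{\pi'_s}\ket{s}\ket{\mathcal{D}_x^u(s)}$, which is the substantive content of using a quantum data structure and the whole point of the nesting. I would emphasize this in the write-up, since it is what distinguishes this bound from the naive nested cost mentioned in the introduction.
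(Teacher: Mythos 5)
Your proposal is correct and follows essentially the same route as the paper: use the preceding lemma (Theorem~\ref{thm:mnrs1} applied to the inner walk, exploiting that $\ket{\mathcal{D}_x(u)}$ is the inner initial state) to realize the outer checking reflection at cost $\widehat{\mathsf{C}}$, amplify its bounded error to inverse-polynomial error via the majority-vote technique from the preliminaries (one log factor, absorbed in the $\tilde{O}$), and compose with Corollary~\ref{cor:mnrs} for the outer walk, noting that the polynomial number of invocations keeps the total error bounded. The paper's proof is just a terser statement of exactly this argument.
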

\begin{proof}
We can reduce the error in $\mathcal{C}$ from bounded error to error that is inverse polynomial at the cost of an additional log factor, as mentioned in \sec{prelim}.
Since we only run for a polynomial number of steps, the result follows.
\end{proof}

This construction can be applied inductively to achieve nesting to any depth $k$, though the number of terms in the complexity will be proportional to $k$. Though the generalization to $k$ levels is fairly straightforward, we give the explicit construction in \sec{gen}.

\subsection{Averaging costs} In this section we prove a lemma about averaging costs of operations that will be useful in some of our nested walk applications. In particular, we sometimes have update costs which are much lower on average than in the worst case, and we would like to incur the average, rather than worst case update cost.

Let $\{U_i\}$ be a set of unitaries acting on the same space. If we know how to exactly implement each $U_i$ with $c_i$ queries, then we can exactly implement the controlled unitary $U = \sum_i |i\>\<i| \otimes U_i$ with $\max_i c_i$ queries. However, if the state we are implementing this unitary on has a nearly uniform distribution over the control register, then intuitively it seems like we should be able to approximately implement the unitary with a cost that goes like the average instead of the maximum. The following lemma makes this precise. 

\begin{lemma}
\label{lem:averagecost}
Let $q_i$ be the query complexity of exactly implementing unitary $U_i$. Let $U$ be the controlled unitary $U = \sum_i |i\>\<i| \otimes U_i$. For any prescribed upper bound $q$, we can convert the state $|s\> = \sum_i \alpha_i  |i\>|s_i\>$ to a state $|t\>$ using at most $q$ queries, such that $|t\>$ has inner product at least $1-\epsilon_q$ with $U|s\> = \sum_i \alpha_i  |i\>U_i|s_i\>$, where $\epsilon_q \leq \sum_{i:q_i > q} |\alpha_i|^2$. In particular, with $O(\sum_i |\alpha_i|^2 q_i)$ queries, we can achieve any constant $\epsilon$.
\end{lemma}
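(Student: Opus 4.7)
The plan is to approximate the controlled unitary $U$ by a truncated variant $\tilde U$ that executes only the cheap branches exactly and parks the expensive branches in an orthogonal subspace via a single auxiliary flag qubit, ensuring the expensive branches contribute nothing to the overlap with the target.

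First I would partition the indices into $G=\{i:q_i\leq q\}$ and $B=\{i:q_i>q\}$, and define $\epsilon_q:=\sum_{i\in B}|\alpha_i|^2$, which satisfies the required upper bound. I would then introduce an auxiliary flag qubit $F$ initialized to $\ket{0}$ and construct $\tilde U$ to act on $\ket{i}\ket{s_i}\ket{0}_F$ as $\ket{i}U_i\ket{s_i}\ket{0}_F$ whenever $i\in G$ and as $\ket{i}\ket{s_i}\ket{1}_F$ whenever $i\in B$. This is a routine controlled construction, and since each invoked branch uses at most $q$ queries while the expensive branches use none, the total query cost of $\tilde U$ is at most $q$.

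Setting $\ket{t}:=\tilde U\bigl(\ket{s}\ket{0}_F\bigr)$ yields
\[
\ket{t}=\sum_{i\in G}\alpha_i\ket{i}U_i\ket{s_i}\ket{0}_F+\sum_{i\in B}\alpha_i\ket{i}\ket{s_i}\ket{1}_F.
\]
Viewing $U$ as acting trivially on the flag register, the target $U\ket{s}\ket{0}_F=\sum_i\alpha_i\ket{i}U_i\ket{s_i}\ket{0}_F$ lies entirely in the $\ket{0}_F$ subspace. Because the $B$-branches of $\ket{t}$ live in the orthogonal $\ket{1}_F$ subspace, their cross terms with the target vanish, and the inner product equals $\sum_{i\in G}|\alpha_i|^2=1-\epsilon_q$, exactly the bound claimed.

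For the final assertion, I would apply Markov's inequality: choosing $q=\epsilon^{-1}\sum_i|\alpha_i|^2 q_i$ yields
\[
\epsilon_q=\sum_{i:q_i>q}|\alpha_i|^2\leq\frac{1}{q}\sum_i|\alpha_i|^2 q_i=\epsilon,
\]
so any constant error $\epsilon$ is achievable using $O\bigl(\sum_i|\alpha_i|^2 q_i\bigr)$ queries. There is no serious obstacle in any step; the one conceptual point worth flagging is that the auxiliary qubit is what makes the $B$-branch contributions automatically orthogonal to the ideal state. Without it, a naive truncation (applying $U_i$ for $i\in G$ and the identity for $i\in B$) would produce residual cross terms of the form $\langle U_is_i|s_i\rangle$ that only yield $1-2\epsilon_q$ after taking real parts; the flag register cleanly eliminates this loss and delivers the tight constant.
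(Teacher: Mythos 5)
Your proof is correct and follows the same basic strategy as the paper: truncate the controlled unitary by executing only the branches with $q_i\leq q$, bound the cost by $q$ via the ``controlled unitary costs the max of its branch costs'' observation, and finish with Markov's inequality applied to the random variable taking value $q_i$ with probability $|\alpha_i|^2$. The one substantive difference is how the expensive branches are handled. The paper's $\tilde U$ applies the identity on those branches, so the inner product with $U\ket{s}$ is $\sum_{i:q_i\leq q}|\alpha_i|^2+\sum_{i:q_i>q}|\alpha_i|^2\<s_i|U_i|s_i\>$; as you correctly point out, the second sum need not have nonnegative real part, so without further argument this only guarantees a real overlap of $1-2\epsilon_q$ rather than the stated $1-\epsilon_q$ (a factor of $2$ that is harmless for the asymptotic use of the lemma, but a genuine sloppiness in the bound as written). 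Your flag-qubit variant routes the expensive branches into an orthogonal subspace, killing those cross terms and giving exactly $1-\epsilon_q$, at the cost of one ancilla and of interpreting the target as $U\ket{s}\ket{0}_F$ on the enlarged space; since the ancilla is just workspace, this is a legitimate and in fact cleaner formulation. Both approaches yield the same query bound and the same Markov-inequality conclusion for the final assertion.
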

\begin{proof}
The algorithm for implementing this conversion is simple: We only implement those matrices $U_i$ whose cost is less than the prescribed upper bound. Otherwise we implement, for example, the identity matrix. More precisely, we implement the unitary $\tilde{U} = \sum_{i:q_i \leq q} |i\>\<i| \otimes U_i + \sum_{i:q_i > q} |i\>\<i| \otimes I$. Clearly this unitary costs at most $q$ queries to implement.

The inner product between $U|s\>$ and $\tilde{U}|s\>$ is $\sum_{i:q_i \leq q} |\alpha_i|^2 \<s_i|s_i\> + \sum_{i:q_i > q} |\alpha_i|^2 \<s_i|U|s_i\>$, which is at least $\sum_{i:q_i \leq q} |\alpha_i|^2$. Thus $\epsilon_q \leq \sum_{i:q_i > q} |\alpha_i|^2$.

To show the final portion of the theorem, consider the random variable that takes value $q_i$ with probability $|\alpha_i|^2$. By Markov's inequality, the probability that this random variable takes a value greater than $k$ times the mean is less than $1/k$. So, for any $k\geq 1$, if we set $q = k(\sum_i |\alpha_i|^2 q_i)$, we can achieve $\epsilon \leq 1/k$.
\end{proof}

Note that Markov's inequality uses very minimal information about the structure of the random variable. If the random variable has more structure, such as being tightly concentrated around the mean, then we can get much better estimates. 
The lemma is simple and constructive, and thus it can also be made efficient in terms of time and space if the approximate unitary $\tilde{U}$ 
can be implemented efficiently.

\subsection{Two applications to triangle finding}
\label{sec:extensions}

We can now present our first two nested quantum walk algorithms, both for the triangle finding problem.
We first focus on the the learning graph of Belovs \cite{bel11} since it is more challenging to implement due to the sparsification idea.
Later we will sketch the quantum walk algorithm corresponding to the more recent and more efficient learning graph in~\cite{LMS12}.

\begin{theorem}
\label{thm:triangle}
There is a nested quantum walk algorithm for triangle finding with quantum query complexity $\tilde{O}(n^{\frac{35}{27}})$.
\end{theorem}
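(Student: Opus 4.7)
The plan is to instantiate \thm{twolevel} with an outer walk that searches for two triangle vertices, as in the MSS algorithm, but with a sparsified data structure, together with an inner walk whose role is to search over sparsifications for one that reveals a triangle certificate. The overall savings will come from trading a cheaper outer setup and update (thanks to sparsification) against a more expensive checking step, with the critical point, enabled by our quantum data structure, being that the inner walk's initial state is prepared only once as part of the outer setup, and not re-prepared inside each checking step.

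First, I would take the outer walk to be on the Johnson graph $J(n,r)$, so that $\delta_1 = \Omega(1/r)$ and $\eps_1 = \Omega(r^2/n^2)$, where a state $R$ is marked iff it contains two triangle vertices (independently of whether the triangle edge is in its data structure). The outer data $\mathcal{D}_G(R)$ will be a quantum superposition over random sparsifications $F$ of $\binom{R}{2}$, each potential edge included independently with probability $s$, together with the queried edges $G_R \cap F$ in each branch; this superposition is simultaneously the initial state of the inner walk. Using \lem{averagecost} I would implement the outer setup and update at their average costs $\mathsf{S} = \tilde{O}(sr^2)$ and $\mathsf{U} = \tilde{O}(sr)$, since across the superposition the number of edges touched per branch is concentrated around its expectation. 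The inner walk then walks over sparsifications inside $\mathcal{D}_G(R)$, searching for a branch whose $F$ contains a triangle edge of $R$; a random sparsification contains a fixed edge with probability $s$, so $\eps' = \Omega(s)$, and checking whether a candidate edge is a triangle edge is done by a Grover search over the third triangle vertex, with inner update cost correspondingly small.

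Plugging these parameters into \thm{twolevel} and optimizing over $r$ and $s$ yields the stated complexity $\tilde{O}(n^{35/27})$; the balanced choices will match the parameters of Belovs' learning graph. The main obstacle will be the faithful coherent implementation of the outer setup and outer update, which must prepare and move through a quantum superposition over sparsifications without spending more than the average number of queries per branch. This is precisely what \lem{averagecost} is designed to provide, once the family of sparsifications is chosen (for instance as a pseudo-random family indexed by a short seed) so that the edges of each $F$ can be enumerated directly from the seed. A secondary subtlety will be the separation of marking from certification: since outer marking no longer requires the triangle edge to be present in the stored data, we recognise a marked $R$ only through non-emptiness of the inner marked set, and we must verify that whenever $R$ is outer-marked the inner walk uncovers a triangle certificate with probability at least $\Omega(s)$, so that \thm{twolevel} is applicable with the $\eps'$ above.
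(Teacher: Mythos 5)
Your outer-walk setup matches the paper's proof: a walk on $J(n,r)$ with $\delta_1=\Omega(1/r)$, $\eps_1=\Omega(r^2/n^2)$, marking $R$ as soon as it contains two triangle vertices (independently of whether the edge is in the data), a quantum superposition over sparsified edge-sets as the outer data structure which doubles as the inner walk's initial state, $\eps'=\Omega(s)$ for the inner walk, and \lem{averagecost} to get the average-case setup and update costs $\tilde{O}(sr^2)$ and $\tilde{O}(sr)$. (The paper realizes the sparsification as a walk on the Johnson graph $J\bigl(\binom{r}{2},s\binom{r}{2}\bigr)$ with the edge subset stored explicitly, so no pseudo-random seeds are needed; that difference is cosmetic and your independent-inclusion variant would also work up to a harmless change in $\delta'$.)

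The genuine gap is the inner checking step, which is never pinned down and, as hinted at, does not give the required cost. You say that "checking whether a candidate edge is a triangle edge is done by a Grover search over the third triangle vertex," but with $\eps'=\Omega(s)$ the inner walk only selects a sparsification $T$; the inner checking must still decide whether the stored subgraph $G_R(T)$, with $\Theta(sr^2)$ potential edges, contains \emph{some} triangle edge. If you do this by a Grover search over pairs (edge of $G_R(T)$, third vertex), the checking cost is $\Omega(\sqrt{sr^2n})$, and its contribution to the total is $\frac{1}{\sqrt{\eps_1\eps_2}}\cdot\sqrt{sr^2n}=\frac{n}{r\sqrt{s}}\cdot r\sqrt{sn}=n^{3/2}$ for every choice of $r$ and $s$, so no optimization can reach $n^{35/27}$. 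The paper instead checks a state $T$ by Grover-searching over the $n$ candidate third vertices $k$ and, for each $k$, running the graph collision subroutine of \sec{warmingup} on $G_R(T)$ (marked vertices being the neighbours of $k$ in $R$), giving $\mathsf{C}'=O(\sqrt{n}\,r^{2/3})$; it is this $r^{2/3}$ saving, together with $\mathsf{U}'=O(1)$ and $\delta'=\Omega(1/(sr^2))$, that makes the final expression $\tilde{O}\bigl(sr^2+ns\sqrt{r}+n+n^{3/2}/(\sqrt{s}\,r^{1/3})\bigr)$ optimize to $\tilde{O}(n^{35/27})$ at $r=n^{2/3}$, $s=n^{-1/27}$. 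Without specifying this graph-collision-based checking (and the inner update and gap), the "plug in and optimize" step of your argument is not actually available.
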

\begin{proof}
Let the input graph be called $G$. The algorithm consists of two quantum walks, which we will call the inner and outer walk. 

\paragraph{Outer walk} The outer walk is a quantum walk on $J(n,r)$, just as in the previous triangle finding algorithm, where  $1\leq r \leq n$ is a parameter to be chosen later. Each state of the Johnson graph corresponds to an $r$-subset of the vertex set of $G$. 
A state of the Johnson graph, $R$, is said to be marked if it contains two triangle vertices. This fully describes the Markov chain of the outer walk; we now have to describe the data structure associated with each state and the checking subroutine, which is the inner walk. As described in  \sec{twolevel}, the data structure associated with each state of the outer walk will be the initial state of the inner walk.

\paragraph{Inner walk} The inner walk is on $J\left(\binom{r}{2},s\binom{r}{2}\right)$, where $1/r < s \le 1$ is a parameter to be chosen later.  A state, $T$, of the inner walk is an $s\binom{r}{2}$-sized subset of the potential edges of an $r$-vertex graph. 
The data structure associated with a state $T$ with outer state $R$, $\mathcal{D}^R(T)$, is the subgraph $G_R(T)$. A state $T$ is said to be marked if $G_R(T)$ contains a triangle edge. The checking subroutine for this walk is the same as the checking step in the quantum walk described in \sec{framework}: we search for a vertex outside of $R$ that forms a triangle with one of the edges in $G_R(T)$. 

\paragraph{Outer walk data structure} We now return to describing the data structure of the outer walk. The quantum data structure corresponding to any state $R$ of the outer walk will be the initial state of the inner quantum walk, i.e., the uniform superposition over all states of the inner walk. For the inner walk, the initial state is $\displaystyle |\pi^R_2\> = \sum_{T \in \Omega_2} \frac{1}{\sqrt{|\Omega_2|}} |T\>|G_R(T)\>$, where $\Omega_2$ is the state space of the inner walk, and $\displaystyle |\Omega_2| = {\binom{\binom{r}{2}}{s\binom{r}{2}}}$.

\paragraph{Inner walk cost analysis} We can now compute the costs of the various operations in the inner walk. In the inner walk, the checking step is the same as the one described in \sec{warmingup} and thus costs $\mathsf{C}_2 = O(\sqrt{n}r^{2/3})$. The update operation just deletes and adds a new edge, giving $\mathsf{U}_2 = O(1)$. The spectral gap of the Johnson walk is $\delta_2 = \Omega(1/sr^2)$ and the fraction of marked states, assuming the outer state is marked, is $\eps_2 \geq s$. Thus the cost of performing the inner walk, given the inital state for the walk, which is the checking cost for the outer walk, is 
\begin{eqnarray*}
\mathsf{C}_1 &=& \tilde{O}\(\frac{1}{\sqrt{\eps_2}}\(\frac{\mathsf{U}_2}{\sqrt{\delta_2}}+\mathsf{C}_2\)\)\\
& =&\tilde{O}\(\frac{1}{\sqrt{s}}\(\sqrt{sr^2}+\sqrt{n}r^{2/3}\)\)\\
& = &\tilde{O}\(r + \sqrt{\frac{n}{s}}r^{2/3}\).\end{eqnarray*}
 Since the checking step with cost $\mathsf{C}_2$ is only bounded error, we need to amplify the success probability, which leads to the log factor subsumed in the $\tilde{O}$ notation.

\paragraph{Outer walk cost analysis} Now we can move to the outer walk. The superposition  corresponding to a state in the outer walk is 
$$|R\>|\pi^R_2\> = |R\>\sum_{T \in \Omega_2} \frac{1}{\sqrt{|\Omega_2|}} |T\>|G_R(T)\>.$$
 Setting up this superposition  clearly costs $\mathsf{S}_1 = sr^2$ queries. A state is considered marked if it contains both vertices of a triangle edge. Thus $\eps_1 = \Omega\(\(\frac{r}{n}\)^2\)$. The spectral gap of this walk is $\delta_1 = \Omega(1/r)$.

Finally, we have to compute the update cost. The update cost is the cost of performing the map 
$$|R\>\otimes |R'\>\otimes \sum_{T \in \Omega_2} \frac{1}{\sqrt{|\Omega_2|}} |T\>|G_R(T)\>$$
  to  
$$|R\>\otimes |R'\>\otimes \sum_{T \in \Omega_2} \frac{1}{\sqrt{|\Omega_2|}} |T\>|G_{R'}(T)\>.$$
 Since  $R$ and $R'$ are adjacent states in a Johnson graph, $R$ has one vertex not in $R'$, and $R'$ has one vertex not in $R$. Let us call these vertices $a$ and $a'$ respectively. The cost of updating  $|G_{R}(T)\>$ to $|G_{R'}(T)\>$ is therefore $\textrm{deg}(a)+\textrm{deg}(a')$. In the worst case the degree of a vertex is $r$. However, on average any vertex has degree $sr$.  Now we can invoke \lem{averagecost}. For a given query upper bound $q$, $\epsilon_q$ is the probability that a randomly selected vertex will have degree greater than $q$. If we set $q = 7(sr)+100\log n$, then we need to compute the probability that the degree of a vertex is more than $7$ times its mean (the extra term is to deal with the case when $sr$ is very small). The degree of a vertex is a random variable distributed according to the hypergeometric distribution, for which strong tail bounds are known. By Theorem 2.10 and eq. (2.11) of~\cite{JLR11}, $\epsilon_q \leq \exp(-q)$ for any $q$ at least $7$ times the mean. In our case, this gives $\epsilon_q\leq \exp(-7sr)\frac{1}{n^{100}}$, which is sufficient for our purposes.

Thus the final query complexity is
\begin{eqnarray*}
& & \tilde{O}\(\mathsf{S}_1 + \frac{1}{\sqrt{\eps_1}}\(\frac{\mathsf{U}_1}{\sqrt{\delta_1}}+\mathsf{C}_1\)\) \\
& = & \tilde{O}\(sr^2 + \frac{n}{r}\(sr\sqrt{r}+r + \sqrt{\frac{n}{s}}r^{2/3}\)\) \\
& = & \tilde{O}\(sr^2 + ns\sqrt{r} + n + \frac{n^{3/2}}{\sqrt{s}r^{1/3}}\),
\end{eqnarray*}
which is optimized by choosing $r=n^{2/3}$ and $s=n^{-1/27}$, giving a final query complexity of $\tilde{O}(n^{35/27})$.
\end{proof}

We note that the optimal values of $r$ and $s$ are the same as those obtained in the learning graph upper bound. In fact, if we add some unnecessary layers of nesting by breaking up the outer walk into several nested walks, we can obtained the learning graph complexity term for term. This suggests a natural correspondence between these two frameworks.

To emphasize a key ingredient in our framework, we note that the inner update is much cheaper than the outer update, since the inner update involves changing a single edge, whereas the outer update involves changing all edges adjacent to a single vertex. This is ideal, since the inner update operation is performed much more frequently than the outer update.

\bigskip
A very recent unpublished result of Lee, Magniez, and Santha \cite{LMS12} constructs a learning graph for triangle finding with query complexity $O(n^{9/7})=O(n^{1.286})$. We can easily convert their learning graph to a very simple nested walk. Due to its simplicity, it is perhaps less illustrative of the power of the framework than the original learning graph for triangle finding, however, it illustrates the cost savings that can be obtained by performing expensive updates less frequently (by keeping them in the outer walk).

\begin{theorem}\label{thm:triangle2}
There is a nested quantum walk algorithm for triangle finding with query complexity $\tilde{O}(n^{9/7})$.
\end{theorem}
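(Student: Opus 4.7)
The plan is to follow the same two-level nested walk template as in \thm{triangle}, but using the simpler loading structure implicit in the $\tilde{O}(n^{9/7})$ learning graph of \cite{LMS12}. The goal is to keep the outer walk essentially the same as in the MSS algorithm (no edge sparsification at all), and then carefully design the inner walk so that the expensive per-step cost lives in the outer update, which is executed rarely, while the inner walk performs only cheap operations on data that has already been queried.

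Concretely, I would take the outer walk to be a walk on $J(n,r_1)$ with state $R$ of size $r_1$ and data $|\mathcal{D}_G(R)\rangle = |G_R\rangle$, exactly as in the MSS algorithm. This yields $\mathsf{S}_1 = O(r_1^2)$, $\mathsf{U}_1 = O(r_1)$, $\eps_1 = \Omega(r_1^2/n^2)$ and $\delta_1 = \Omega(1/r_1)$. For the checking subroutine, I would then set up an inner walk designed to locate a third vertex that completes a triangle together with two vertices of $R$; the natural candidate is a walk on $J(n,r_2)$ with state $K$ a subset of $V$ of size $r_2$ and data consisting of the bipartite edge information $E(K,R)$. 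The inner setup cost $O(r_1 r_2)$ is absorbed into $\mathsf{S}_1$ as part of the quantum data structure prepared at the outer setup, and the inner checking becomes classical given the stored data ($\mathsf{C}_2 = 0$). With $\eps_2 = \Omega(r_2/n)$ and $\delta_2 = \Omega(1/r_2)$, plugging into \thm{twolevel} and optimizing the parameters around $r_1 = n^{4/7}$ should yield the claimed $\tilde{O}(n^{9/7})$ bound.

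The main obstacle is that a naive inner walk incurs a per-step inner update of $r_1$ queries (to learn the edges between the newly added vertex of $K$ and all of $R$), leading to a spurious $n^{3/2}$ term in the total complexity. To break this barrier I would apply \lem{averagecost}: the inner update acts on a nearly uniform superposition over choices of the new vertex in $K$, so it suffices to pay the \emph{average} rather than the worst-case update cost. Using concentration of the hypergeometric distribution, as in the proof of \thm{triangle}, the average update can be bounded by a cost compatible with the target complexity, while amortizing the rare worst-case updates through the truncation provided by \lem{averagecost}. Setting the query budget per inner step appropriately (and similarly for the outer update where $G_R$ interacts with the stored bipartite structure $E(K,R)$) should, after optimization, give overall query complexity $\tilde{O}(n^{9/7})$.

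The technically delicate step — and the place where I would expect to spend most of the effort — is exactly this control of the effective update cost: one has to verify that the approximate unitary produced by \lem{averagecost} is close enough to the ideal walk operator that a polynomial number of nested iterations still succeeds with bounded error, and that the correspondence with the LMS learning graph's two-stage loading pattern really produces the cleanly balanced exponents $r_1 = n^{4/7}$ and $r_2$ chosen so that $r_1^2$, $r_1 r_2$, $n\sqrt{r_1}$, and the inner-walk contribution all match at $n^{9/7}$.
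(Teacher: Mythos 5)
There is a genuine gap, and it sits exactly at the step you flag as ``technically delicate.'' In your design the inner data structure stores the \emph{complete} bipartite edge information between $K$ and $R$, so every inner update must query all $r_1$ potential edges between the newly added vertex of $K$ and $R$: the cost is exactly $r_1$ on every branch of the superposition, with no variance whatsoever. \lem{averagecost} only helps when the per-branch cost fluctuates (as in the proof of \thm{triangle}, where the data is a \emph{sparsified} edge set and the number of stored edges incident to the swapped vertex is hypergeometrically distributed with mean $sr \ll r$); here the ``average'' equals the worst case, so the truncation argument buys nothing and the offending term $\frac{1}{\sqrt{\eps_1\eps_2}}\frac{1}{\sqrt{\delta_2}}\mathsf{U}_2 = \frac{n}{r_1}\cdot\sqrt{\tfrac{n}{r_2}}\cdot\sqrt{r_2}\cdot r_1 = n^{3/2}$ remains, independent of $r_1,r_2$. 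Reintroducing sparsification to create the variance you need would change $\eps_2$ correspondingly and leads back to the $\tilde{O}(n^{35/27})$ analysis, not $n^{9/7}$.

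The paper's proof fixes this by restructuring the nesting rather than by averaging. The outer state $R_1\in J(n,r_1)$ is marked if it contains just \emph{one} triangle vertex, so $\eps_1=\Omega(r_1/n)$ (not $r_1^2/n^2$), and the data structure stores only the bipartite edges $G_{R_1,R_2}$ in superposition over $R_2\in J(n,r_2)$ --- no induced subgraph $G_{R_1}$ and no sparsification. The inner walk looks for a \emph{second} triangle vertex in $R_2$ (so $\mathsf{U}_2=r_1$ is harmless: its prefactor is now $\sqrt{n/r_1}\cdot\sqrt{n/r_2}\cdot\sqrt{r_2}\cdot r_1=n\sqrt{r_1}$), and the \emph{third} vertex is found in the inner checking step via a Grover search combined with the unbalanced graph collision subroutine of \app{gc}, giving $\mathsf{C}_2=\sqrt{n}(r_1r_2)^{1/3}$ rather than your $\mathsf{C}_2=0$. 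The total is $r_1r_2+\sqrt{n}\,r_2+n\sqrt{r_1}+n^{3/2}/(r_1r_2)^{1/6}$, optimized at $r_1=n^{4/7}$, $r_2=n^{5/7}$; the key feature is $r_2>r_1$, so the frequently executed inner update is the cheap one, with no appeal to \lem{averagecost} at all. Your proposal keeps the MSS-style outer walk (two marked vertices, full $G_R$ data) and therefore cannot reach $n^{9/7}$ by the route you describe.
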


\begin{proof}
We will make use of the fact that collision finding on an unbalanced bipartite graph on classes of size $r_1$ and $r_2$ with $r_1\leq r_2\leq r_1^2$ has quantum query complexity $O((r_1r_2)^{1/3})$ (see \app{gc}).

\paragraph{Outer walk}The outer walk is on $J(n,r_1)$, for $1\leq r_1\leq n$. A state $R_1$ is marked if it contains a triangle vertex. Thus, we have $\delta_1 =\Omega( \frac{1}{r_1})$ and $\eps_1 = \Omega(\frac{r_1}{n})$. 

\paragraph{Inner walk}The inner walk is on $J(n,r_2)$, for $1\leq r_2\leq n$. Suppose the outer state, $R_1$, is marked, so it contains the vertex from some triangle. A state $R_2$ is marked if it contains a vertex from the same triangle. 

The data structure associated with a state $R_2$ with outer state $R_1$, $\mathcal{D}^{R_1}(R_2)$, is the subgraph of $G$ containing exactly those edges with one endpoint in $R_1$, and one endpoint in $R_2$, $G_{R_1, R_2}$. Thus, $\delta_1=\Omega(\frac{1}{r_2})$ and $\eps_2 = \Omega(\frac{r_2}{n})$. To update, we must query $r_1$ potential edges from a new vertex to $R_1$, so $\mathsf{U}_2 = r_1$. To check if a state is marked, we search for a third triangle vertex, using a graph collision subroutine, with total cost $\mathsf{C}_2 = \sqrt{n}(r_1r_2)^{1/3}$. 

\paragraph{Outer walk data structure} The quantum data structure corresponding to a state $R_1$ of the outer walk is $\sum_{R_2}\frac{1}{\sqrt{\abs{\Omega_2}}}\ket{R_2}\ket{G_{R_1, R_2}}$. The cost of learning $G_{R_1,R_2}$ is $\mathsf{S}_1=r_1r_2$, and the cost of updating $G_{R_1,R_2}$ to $G_{R_1',R_2}$ is $\mathsf{U}_1 = r_2$, since we must query $r_2$ potential edges between the new vertex and all of $R_2$.

The final query complexity is (neglecting log factors)
\begin{eqnarray*}
&&\mathsf{S}_1+\frac{1}{\sqrt{\eps_1}}\left(\frac{1}{\sqrt{\delta_1}}\mathsf{U}_1+\frac{1}{\sqrt{\eps_2}}\left(\frac{1}{\sqrt{\delta_2}}\mathsf{U}_2+\mathsf{C}_2\right)\right)\\
&=&r_1r_2+\sqrt{\frac{n}{r_1}}\left(\sqrt{r_1}r_2+\sqrt{\frac{n}{r_2}}\left(\sqrt{r_2}r_1+\sqrt{n}(r_1r_2)^{1/3}\right)\right)\\
&=& r_1r_2 + \sqrt{n}r_2 + n\sqrt{r_1} + \frac{n^{3/2}}{r_1^{1/6}r_2^{1/6}}.
\end{eqnarray*}
Setting $r_1 = n^{4/7}$ and $r_2 = n^{5/7}$ gives the claimed upper bound. 
Note that the optimal values have $r_2$ larger than $r_1$, making the inner update cheaper than the outer update. The inner update is run more often, not only because $\frac{1}{\sqrt{\delta_2}}>\frac{1}{\sqrt{\delta_1}}$, but more fundamentally, because the term $\frac{1}{\sqrt{\delta_2}}\mathsf{U}_2$ has a factor of $\frac{1}{\sqrt{\eps_1\eps_2}}$, whereas the term $\frac{1}{\sqrt{\delta_1}}\mathsf{U}_1$ only has a factor of $\frac{1}{\sqrt{\eps_1}}$.
\end{proof}

For this walk, in some sense, the inner marked set is independent of the outer state. To see this, consider the case where there is a single triangle, and note that this is the edge case from which we calculate a lower bound on the probability that a state is marked. In the case of a single triangle, we can imagine the inner and outer walks being on independent sets of vertices, $R_1$ and $R_2$, each of which is marked if it contains a triangle vertex. (Of course, they should not contain the same triangle vertex.) In some sense then, this algorithm does not use the full power of the model.

\section{Multi-level nested quantum walks}
\subsection{Generalization of the framework}
\label{sec:gen}

The simple two-level nesting from \sec{framework} can be generalized in a straightforward way to $k$ levels of nesting.

Consider $k$ walks $\{(P^{(i)},\Omega^{(i)},\pi^{(i)})\}_{i=1}^k$. Suppose the \th{i} walk has marked sets $$\{M_x^u\}_{u\in\Omega^{(1)}\times\dots\times\Omega^{(i-1)}}$$
(with $M_x$ implicitly being the unique marked set for the outermost walk) and data mappings
$$\{\mathcal{D}_x^u:\Omega^{(i)}\rightarrow \mathcal{H}_{D^{(i)}}\}_{u\in\Omega^{(1)}\times\dots\times\Omega^{(i-1)}}.$$
Suppose that for all $i=1,\dots,k$ and all $u\in\Omega^{(1)}\times\dots\times\Omega^{(i-1)}$, we have
$$M_x^u = \{v\in\Omega^{(i)}:M_x^{(u,v)}\neq\emptyset\},$$
and for $i=1,\dots,k-1$, $u\in\Omega^{(1)}\times\dots\times\Omega^{(i-1)}$ and $v\in\Omega^{(i)}$, we have $\mathcal{H}_{D^{(i)}}=\mathcal{H}_{\Omega^{(i+1)}}\otimes\mathcal{H}_{D^{(i+1)}}$, and
$$\ket{\mathcal{D}_x^u(v)} = \sum_{w\in\Omega^{(i+1)}}\sqrt{\pi^{(i+1)}_w}\ket{w}\ket{\mathcal{D}_x^{(u,v)}(w)}.$$

Let $\delta_i$ denote the spectral gap of $P^{(i)}$, and $\eps_i$ be a lower bound for non-empty $\pi^{(i)}(M_x^u)$ for all $u\in\Omega^{(1)}\times\dots\times\Omega^{(i-1)}$. Let $\mathsf{S}$ denote the cost of constructing 
$$\sum_{u\in\Omega^{(1)}}\sqrt{\pi^{(1)}_u}\ket{u}\ket{\mathcal{D}_x(u)}.$$
Let $\mathsf{U}_i$ be the update cost of the \th{i} walk. Let $\mathsf{C}$ be the cost of implementing, for $w\in\Omega^{(k)}$, and $u\in\Omega^{(1)}\times\dots\times\Omega^{(k-1)}$,
$$\ket{w}\ket{\mathcal{D}_x^u(w)}\mapsto\left\{\begin{array}{ll}
-\ket{w}\ket{\mathcal{D}_x^u(w)} & \mbox{if } w\in M_x^u\\
\ket{w}\ket{\mathcal{D}_x^u(w)} & \mbox{else}\end{array}\right. .$$

\begin{corollary}
We can decide whether $M_x$ is non-empty with bounded error in 
\begin{eqnarray}&&\tilde{O}\(\mathsf{S}+\sum_{i=1}^k\left(\prod_{j=1}^i\frac{1}{\sqrt{\eps_j}}\right)\frac{1}{\sqrt{\delta_i}}\mathsf{U}_i+\left(\prod_{j=1}^k\frac{1}{\sqrt{\eps_j}}\right)\mathsf{C}\)\label{eq:gen}\end{eqnarray}
queries.
\end{corollary}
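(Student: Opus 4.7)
The plan is to proceed by induction on the number of nesting levels $k$. The base case $k=1$ reduces to \cor{mnrs}: the sum collapses to the single term $\frac{1}{\sqrt{\eps_1\delta_1}}\mathsf{U}_1$ and the product to $\frac{1}{\sqrt{\eps_1}}$, giving the standard MNRS bound. For the inductive step it is cleaner to prove a strengthened reflection-style statement in the spirit of \thm{mnrs1} rather than \cor{mnrs}: given as input the outermost initial state $\sum_{u\in\Omega^{(1)}}\sqrt{\pi^{(1)}_u}\ket{u}\ket{\mathcal{D}_x(u)}$ of a $k$-level nested walk, we can implement, with bounded error, the reflection about that state conditioned on $M_x$ being non-empty, at cost
$$\tilde{O}\left(\sum_{i=1}^k\left(\prod_{j=1}^i\frac{1}{\sqrt{\eps_j}}\right)\frac{\mathsf{U}_i}{\sqrt{\delta_i}}+\left(\prod_{j=1}^k\frac{1}{\sqrt{\eps_j}}\right)\mathsf{C}\right).$$
The corollary then follows by adding the one-time outer setup cost $\mathsf{S}$ that builds this initial state.

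To perform the inductive step from $k-1$ to $k$, I would view the walks $(P^{(2)},\ldots,P^{(k)})$ with the outer state $u\in\Omega^{(1)}$ held fixed as a $(k-1)$-level nested walk satisfying exactly the same nesting hypothesis as the statement, but over the shifted sequences $(\eps_2,\ldots,\eps_k)$, $(\delta_2,\ldots,\delta_k)$, $(\mathsf{U}_2,\ldots,\mathsf{U}_k)$ and with the same innermost check $\mathsf{C}$. Crucially, the nesting identity $\ket{\mathcal{D}_x(u)}=\sum_{w\in\Omega^{(2)}}\sqrt{\pi^{(2)}_w}\ket{w}\ket{\mathcal{D}_x^{(u,w)}(w)}$ tells us that the initial state for this $(k-1)$-level inner walk is already present inside $\ket{\mathcal{D}_x(u)}$, so no additional inner setup is needed. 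By the inductive (reflection-version) hypothesis I then obtain an implementation of the outer checking reflection $\mathcal{C}$ at cost
$$\widehat{\mathsf{C}} = \tilde{O}\left(\sum_{i=2}^k\left(\prod_{j=2}^i\frac{1}{\sqrt{\eps_j}}\right)\frac{\mathsf{U}_i}{\sqrt{\delta_i}}+\left(\prod_{j=2}^k\frac{1}{\sqrt{\eps_j}}\right)\mathsf{C}\right).$$
Feeding this $\widehat{\mathsf{C}}$ into \thm{mnrs1} applied to the outer walk, with the given outer initial state, yields the desired outer reflection at cost $\tilde{O}\bigl(\frac{1}{\sqrt{\eps_1}}(\frac{\mathsf{U}_1}{\sqrt{\delta_1}}+\widehat{\mathsf{C}})\bigr)$, which distributes into the stated level-$k$ bound.

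I expect the main obstacle to be error control across the $k$ levels of nesting. Each inductive invocation produces only a bounded-error reflection, while the outer walk iterates its checking subroutine $\Theta(1/\sqrt{\eps_1})$ times, and this nests recursively. Exactly as in the proof of \thm{twolevel}, I would amplify each level's reflection to accuracy $1/\textrm{poly}(n)$ via the repetition-and-majority trick recalled in \sec{prelim}, costing one extra logarithmic factor per level; since the full algorithm runs for only polynomially many steps overall, all these factors are absorbed into the $\tilde{O}$. One also has to check at each level that the reflection being implemented really is about the correct $u$-dependent inner initial state carried in $\ket{\mathcal{D}_x(u)}$; the nesting identity in the hypothesis is exactly what guarantees this, and as a consequence the setup cost $\mathsf{S}$ is paid only once, at the outermost level.
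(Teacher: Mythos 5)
Your proposal is correct and follows essentially the same route as the paper: the paper obtains the $k$-level bound by inductively applying the two-level construction, i.e., implementing each level's checking reflection via \thm{mnrs1} on the next inner walk (whose initial state is already carried in the data structure, so setup is paid once), and amplifying each bounded-error reflection to $1-1/\textrm{poly}(n)$ accuracy as in \sec{prelim}, absorbing the logarithmic overhead into the $\tilde{O}$. Your explicit induction with the reflection-style strengthening is exactly the intended argument.
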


This is not the most general nested walk. Rather than have a single \th{i} walk $(P^{(i)},\Omega^{(i)},\pi^{(i)})$, we could have a family of \th{i} walks, $\{(P^u,\Omega^u,\pi^u)\}_{u\in\Omega^{(1)}\times \dots\times\Omega^{(i-1)}}$, where $\Omega^{(i)}=\bigcup_{u\in\Omega^{(1)}\times\dots\times\Omega^{(i-1)}}\Omega^u$. However, such a general framework sacrifices the simplicity that is a feature of nested quantum walks. 

\subsection{Application to subgraph containment}\label{sec:subgraph}
It is simple to extend our quantum walk algorithms for triangle finding to obtain the more general learning graph upper bounds for subgraph containment~\cite{LMS11,zhu11,LMS12}. 
We now give an explicit construction of nested quantum walks corresponding to the learning graphs of~\cite{LMS12}, matching their complexity up to polylogarithmic factors, and even potentially improving their result for some cases.
For simplicity consider  the case of undirected graphs with no self-loops.
Let $H$ be a graph on vertices $\{1,2,\ldots,k\}$. Then a graph $G$ contains a copy of $H$ if $G$ induces 
a graph that is isomorphic to $H$ on some subset of $k$ vertices.

\paragraph{The learning graph construction} In~\cite{LMS12}, a meta-language is introduced for designing a learning graph that detects a copy of $H$ in $G$.
Instructions are \texttt{Setup}, \texttt{LoadVertex($i$)} for $i\in [k]$ and \texttt{LoadEdge($i,j$)} for $\{i,j\}\in E(H)$.
A `program' is an ordering of $\{{\tt Setup}\}\cup\{\mbox{{\tt LoadVertex($i$)}}\}_{i\in [k]}\cup\{\mbox{{\tt LoadEdge($i,j$)}}\}_{\{i,j\}\in E(H)}$ such that {\tt Setup} appears first and for each edge $\{i,j\}$ of $H$, instruction \texttt{LoadEdge($i,j$)} occurs in the program
after instructions \texttt{LoadVertex($i$)} and \texttt{LoadVertex($j$)}.

Parameters are vertex-set sizes $r_i\geq 1$ for each vertex $i$ of $H$, and average degrees $d_{ij}\geq 1$ for each edge $\{i,j\}$ of $H$. These parameters are not related to $H$ but parameterize the resulting learning graph.
Assuming some natural conditions on these parameters, one can derive an algorithm whose complexity only depends on them. Optimizing them can be done via a linear program (see~\cite{LMS12} for a link to the program).
We assume that $r_i\leq n$,  $d_{ij} \le \max (r_i, r_j)$, and that: 
\begin{equation} \label{eq:lg-hyp} 
\text{\em For all $i$ there exists $j$ such that $\{i,j\} \in E(H)$ and $d_{ij} (2r_j+1)/(2r_i+1) \ge 1$.}
\end{equation}

Each instruction has a cost that propagates to later instructions, and its total effective cost. 
The first one is called {\em global cost}
and the second one {\em local cost}. In \cite{LMS12}, a table is given, nearly identical to Table \ref{tab:cost}, providing global and local costs for each of the three types of instructions. Given a program of $\tau=k+\abs{E(H)}+1$ instructions, if $l_t$ denotes the local cost of instruction $t$, and $g_t$ the global cost, then the resulting upper bound for deciding if $H$ is in $G$ is
\begin{eqnarray} 
&&\sum_{t=1}^{\tau}\(\prod_{t'< t}g_{t'}\)l_t.\label{eq:lg}
\end{eqnarray}

\paragraph{The corresponding nested walk} Given a program, we can define a sequence of nested walks as follows. The {\tt Setup} instruction corresponds to our quantum walk setup. Each non-setup instruction corresponds to a quantum walk, with the walk corresponding to the \th{t} instruction being nested at depth $t$; that is, the \th{t} walk is the checking procedure for the \th{(t-1)} walk. 
Each non-setup instruction corresponds to an edge or vertex in $H$, and so each of our walks will as well. 

All walks will be on Johnson graphs. The state of the walk corresponding to a vertex $i$ is described by a subset of vertices $R_i$ of size $r_i$,
and the state of the walk corresponding to  an edge $\{i,j\}$ is described by a subset $T_{i,j}$ of $d_{ij}\min(r_i,r_j)$ potential edges between $R_i$ and $R_j$.
For each edge $\{i,j\}$ of $H$, our database maintains the edges $G(T_{ij})$ of $G$ in $T_{i,j}$.

A state $R_i$ is marked if there exists some copy of $H$ in $G$ such that: for each $R_j$ in a walk at depth \emph{less than} the depth of $R_i$, $R_j$ contains the \th{j} vertex in the subgraph, and for each $T_{uv}$ at depth less than the depth of $R_i$, $T_{uv}$ contains edge $\{u,v\}$ in the subgraph, and finally, $R_i$ contains the \th{i} vertex in the subgraph. A state $T_{ij}$ has a similar condition for being marked, except $T_{ij}$ must contain edge $\{i,j\}$ in the subgraph.

\paragraph{Setup}
To setup, we must create
the uniform superposition $\ket{\pi}$ of subgraphs $G(T_{ij})$, one for each edge $\{i,j\}\in E(H)$.
The total cost is
$$
\mathsf{S}=\sum_{\{i,j\} \in H} \min (r_i, r_j) d_{ij}.
$$

\paragraph{Vertex walks}
To update a walk on vertex set $R_i$ we remove a vertex and add a new one. To update the data, we must update the graphs $G(T_{ij})$, for each $j$ adjacent to $i$ in $H$. 
By \lem{averagecost}, this complexity is related to the average degree in $R_i$, which we can calculate as $d_{ij}\min(1,r_j/r_i)$.  The total cost for this type of update is then 
$$
\mathsf{U}=\sum_{j: \{i,j\}\in H} d_{ij} \min(1,{r_j}/{r_i}). \Big.
$$
Observe that this quantity is always $\Omega(1)$, by the imposed condition in~\eqref{eq:lg-hyp}. 

We can easily calculate that the proportion of marked states in this type of walk is $\eps = r_i/n$, and the spectral gap is $\delta=1/r_i$.

\paragraph{Edge walks}
Lastly, to update a walk on edge set $T_{ij}$ we simply replace one edge in $T_{ij}$ and update $G(T_{ij})$.
Thus the cost of this type of update is simply $\mathsf{U}=1$. For this type of walk we have $\eps=d_{ij}/\max(r_i,r_j)$ (since $\eps$ is the probability that a particular edge of $R_i\times R_j$ is in $T_{ij}$) and $\delta = 1/\min(r_i,r_j)$.

\paragraph{Checking} A state of the deepest walk is marked if the data encodes an entire copy of $H$ in $G$. Since we can detect this from the data with no further queries to $G$, the final checking cost is $\mathsf{C}=0$. 

\paragraph{Global and local costs} We can rephrase our costs for each walk into global and local costs. The global cost is simply the part of the cost that propagates forward to deeper walk costs. For the setup (which we can consider as a walk on a complete graph in which every state is marked), this is simply $1$, whereas for other walks, the global cost is $\frac{1}{\sqrt{\eps}}$ (see \eq{gen}). Similarly, we can consider local costs. For the setup this is defined as $\mathsf{S}$, whereas for each other walk, it is defined as $\frac{1}{\sqrt{\eps\delta}}\mathsf{U}$.  
We list all global and local costs in \tab{cost}.
\begin{table}

\begin{center}
\begin{tabular}{|r|c|c|}
\hline
Instruction & Global Cost & Local Cost \\
\hline\hline
\texttt{Setup} & 1 & $\mathsf{S}=\displaystyle\sum_{\{i,j\} \in H} \min\{r_i, r_j\} d_{ij}$ \\
\hline
\texttt{LoadVertex($i$)} & $\frac{1}{\sqrt{\eps}}=\sqrt{n/r_i}$  & $\frac{1}{\sqrt{\eps\delta}}\mathsf{U}=\displaystyle\sqrt{n} \times \!\!
\sum_{j: \{i,j\}\in H} d_{ij} \min(1,{r_j}/{r_i}) $ \\
\hline
\texttt{LoadEdge($i,j$)} & $\frac{1}{\sqrt{\eps}}=\sqrt{\max\{r_i, r_j\} / d_{ij}}$ & $\frac{1}{\sqrt{\eps\delta}}\mathsf{U}=\sqrt{r_i  r_j}$ \\
\hline
\end{tabular}
\end{center}
\caption{Global and local costs for quantum walks corresponding to each type of instruction.}\label{tab:cost}

\end{table}

Comparing \eq{gen} and \eq{lg}, we see that the upper bounds we achieve for subgraph detection are the same as those of \cite{LMS12} (up to polylogarithmic factors), as a function of the global and local costs. Furthermore, our expressions for global and local costs are almost identical to those of \cite{LMS12}. The only exception is that, depending on some sparsity condition, namely whether $d_{ij} \min(r_i,r_j)<\max(r_i,r_j)$, the local cost of \texttt{LoadEdge($i,j$)}
has two different expressions in~\cite{LMS12}: either $\sqrt{r_i  r_j}$ (when the condition holds) or $\max(r_i, r_j)$
(otherwise). Therefore, ignoring polylogarithmic factors, our total complexity is the same as that of \cite{LMS12} or potentially better in some cases, because of our potentially better edge-local cost, though we know of no particular graph for which we have asymptotically better complexity.
We remark that we believe that this improvement in edge-local cost could also be made in the learning graph construction of \cite{LMS12}, however, it isn't immediately clear how to do so in that setting, whereas in the quantum walk setting it is natural and immediate.

\section{Notes on other learning graphs}

In~\cite{bel12}, a learning graph algorithm for graph collision is presented. We note that this is, in fact, based on an unpublished quantum walk algorithm due to Ambainis~\cite{bel12}, which has the same complexity, up to log factors.

As to the possibility of reproducing the recent learning graph upper bound for $k$-distinctness~\cite{bel12}, while this upper bound is achieved via a method that deviates significantly from the original learning graph framework, we do believe that we should be able to construct a quantum walk algorithm with matching complexity.

\section*{Acknowledgments}

The authors would like to thank Andrew Childs, Troy Lee, Ashwin Nayak and Miklos Santha for helpful discussions.

\bibliographystyle{alpha}
\bibliography{refs}

\appendix

\section{Graph collision subroutine}\label{app:gc}

In this section, we show how to compute graph collision on a bipartite graph with classes of size $r_1$ and $r_2$ in quantum query complexity $O\((r_1r_2)^{1/3}\)$. The algorithm is a straightforward generalization of the standard graph collision algorithm of \cite{MSS07}, based on the element distinctness algorithm of \cite{amb04}. Note that when $r_1=r_2$, the well-known $O\(r_1^{2/3}\)$ upper bound is achieved.

\begin{lemma}
The quantum query complexity of graph collision on a bipartite graph with classes of size $r_1$ and $r_2$, when $r_1\leq r_2\leq r_1^2$, is in $O\((r_1r_2)^{1/3}\)$.
\end{lemma}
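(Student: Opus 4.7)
The plan is to generalize the standard quantum walk algorithm for graph collision from~\cite{MSS07} by walking on a product of two Johnson graphs, one for each side of the bipartition, with appropriately chosen subset sizes.

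Let $V_1, V_2$ be the two vertex classes of the fixed bipartite graph $G$, with $|V_1|=r_1$ and $|V_2|=r_2$, and let the input $x\in\{0,1\}^{r_1+r_2}$ indicate the marked vertices. First, I would define the outer state space to be $\Omega = \binom{V_1}{s_1}\times\binom{V_2}{s_2}$, where $s_1$ and $s_2$ are parameters to be optimized. The Markov chain $P$ is the natural product walk on $J(r_1,s_1)\times J(r_2,s_2)$ obtained by, at each step, flipping a fair coin and taking one step of the corresponding Johnson walk on the chosen side; standard arguments give $\delta = \Omega(1/\max(s_1,s_2))$. The data associated to a state $(R_1,R_2)$ is the restriction $x|_{R_1\cup R_2}$ of the input to the chosen vertices. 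A state is marked precisely when there exist $u\in R_1,v\in R_2$ with $\{u,v\}\in E(G)$ and $x_u=x_v=1$; note this check uses only the stored data since $G$ is fixed and does not have to be queried.

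Next I would compute the costs and parameters. Setup requires querying each vertex in $R_1\cup R_2$ once, so $\mathsf{S}=O(s_1+s_2)$. Each update adds/removes a single vertex on one side and costs $\mathsf{U}=O(1)$. Checking is free ($\mathsf{C}=0$) because $G$ is known. If $M_x\neq\emptyset$, pick any witness edge $\{u^*,v^*\}$; the probability that a uniformly random $(R_1,R_2)$ contains both endpoints is $\eps=\Omega\!\bigl(\tfrac{s_1 s_2}{r_1 r_2}\bigr)$. Plugging into \cor{mnrs} gives total cost
\[
O\!\left(s_1+s_2+\sqrt{\tfrac{r_1 r_2}{s_1 s_2}}\bigl(\sqrt{\max(s_1,s_2)}\bigr)\right).
\]
The symmetric choice $s_1=s_2=(r_1 r_2)^{1/3}$ balances the two terms and yields $O((r_1 r_2)^{1/3})$.

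Finally I would verify feasibility of this choice: we need $s_i\le r_i/2$ on both sides, i.e.\ $(r_1 r_2)^{1/3}\le r_1$ and $(r_1 r_2)^{1/3}\le r_2$. The first inequality is equivalent to $r_2\le r_1^2$, which is exactly the stated hypothesis, and the second follows from $r_1\le r_2\le r_2^2$. The only mildly delicate step, and the one that most deserves care, is to confirm that the product of two Johnson walks fits cleanly into the MNRS setup of \sec{prelim} with spectral gap $\Omega(1/\max(s_1,s_2))$; this follows from the fact that the eigenvalues of the randomized product chain are convex combinations of the individual eigenvalues, so the spectral gap of the product is at least half the minimum of the two individual gaps $\Omega(1/s_1)$ and $\Omega(1/s_2)$. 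Everything else is the direct analogue of the unipartite analysis of~\cite{MSS07}.
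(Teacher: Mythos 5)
Your proposal is correct and takes essentially the same route as the paper: a quantum walk on a product of two Johnson graphs whose data is the query values of the chosen vertices, with $\eps = \Omega\left(\frac{s_1 s_2}{r_1 r_2}\right)$, update cost $O(1)$, free checking (since $G$ is fixed), and the balanced choice $s_1=s_2=(r_1 r_2)^{1/3}$, with $r_2\leq r_1^2$ ensuring feasibility. The only immaterial differences are that the paper fixes a single subset size $m$ on both sides from the outset and lets the two Johnson walks step simultaneously, rather than using your coin-flip product chain with two parameters that you then set equal.
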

\begin{proof}
Let $G$ be a bipartite graph on classes $R_1$ and $R_2$ with $\abs{R_1}=r_1$ and $\abs{R_2}=r_2$. We proceed as in the standard graph collision algorithm: let $m\leq r_1$ be a parameter to be optimized later. We walk on the product graph $J(r_1,m)\times J(r_2,m)$, where a state $S_1,S_2$ corresponds to a pair of subsets; $S_1$ is an $m$-subset of $R_1$ and $S_2$ is an $m$-subset of $R_2$. Two states $S_1,S_2$ and $S_1',S_2'$ are adjacent in $J(r_1,m)\times J(r_2,m)$ if $S_1$ is adjacent to $S_1'$ in $J(r_1,m)$ and $S_2$ is adjacent to $S_2'$ in $J(r_2,m)$. 

The data associated with a state $S_1,S_2$ is the set of query values for vertices $i\in S_1\cup S_2$. Thus, the setup cost is $\mathsf{S} = O(m)$, and the update cost is $\mathsf{U}=O(1)$. A state is marked if there is a pair $i_1\in S_1$ and $i_2\in S_2$ such that $i_1$ is adjacent to $i_2$ and both are marked. Since the data contains query values for both $i$ and $i'$, checking costs $\mathsf{C}=0$ queries. The proportion of marked states is at least $\eps = \Omega\(\frac{m^2}{r_1r_2}\)$. The spectral gap of $J(r_1,m)\times J(r_2,m)$ is $\delta=\Omega\(\frac{1}{m}\)$. Thus, the query complexity is 
$$O\left(m+\frac{\sqrt{r_1r_2}}{m}\sqrt{m}\right),$$
which is optimized by setting $m=(r_1r_2)^{1/3}$ (which is smaller than $r_1$ when $r_2\leq r_1^2$), yielding quantum query complexity $O\((r_1r_2)^{1/3}\)$.
\end{proof}

\end{document}